\lstdefinestyle{custompython}{
  language=Python,
  basicstyle=\ttfamily\footnotesize,
  commentstyle=\color{green!40!black},
  keywordstyle=\color{blue},
  numberstyle=\tiny\color{gray},
  numbers=left,
  numbersep=5pt,
  backgroundcolor=\color{gray!5},
  showspaces=false,
  showstringspaces=false,
  showtabs=false,
  frame=single,
  tabsize=2,
  breaklines=true,
  breakatwhitespace=false,
  captionpos=b,
  belowcaptionskip=10pt,
  xleftmargin=15pt,
  xrightmargin=15pt,
}
\newtheorem{proposition}{Proposition}
\newtheorem{definition}{Definition}
\newtheorem{theorem}{Theorem}
\newtheorem{corollary}{Corollary}
\pgfplotsset{compat=1.14}
\begin{document}

\title{Analysis of Hopfield Model as Associative Memory} 
\author{Silvestri Matteo} 

\date{Department of Mathematics, University of Rome “La Sapienza”}

\maketitle  

\begin{abstract} 
	This article delves into the Hopfield neural network model, drawing inspiration from biological neural systems. The exploration begins with an overview of the model's foundations, incorporating insights from mechanical statistics to deepen our understanding. Focusing on \emph{audio retrieval}, the study demonstrates the Hopfield model's associative memory capabilities. Through practical implementation, the network is trained to retrieve different patterns. 

\end{abstract}

\section{An Overview of Neuroscience}
	 
The Hopfield model finds inspiration in the intricate connections among biological neurons in the human brain, mimicking nature's efficiency in information processing. Similar to the synaptic communication in biological systems, the Hopfield network utilizes connections to store and retrieve patterns.  Before delving further, we'll provide a brief overview of biological neurons, shedding light on their fundamental structures. 

\subsection{Biological Neurons}
A neuron, the fundamental building block of our nervous system, consists of three main parts: the dendrites, the cell body (soma), and the axon. \textbf{Dendrites} receive signals from other neurons, transmitting these signals to the cell body. The \textbf{Cell body} processes these signals and, if the input is sufficient, generates an electrical impulse. This impulse travels down the \textbf{Axon}, a long, slender projection, to communicate with other neurons or muscles. \textbf{Synapses}, the junctions between neurons, facilitate this communication by transmitting electrochemical signals to other cells. This intricate architecture enables neurons to form complex networks, laying the foundation for the remarkable functionality of our nervous system.

\begin{figure}[htpb]
    \centering
    \begin{subfigure}{0.45\textwidth}
        \centering
        \includegraphics[width=\linewidth]{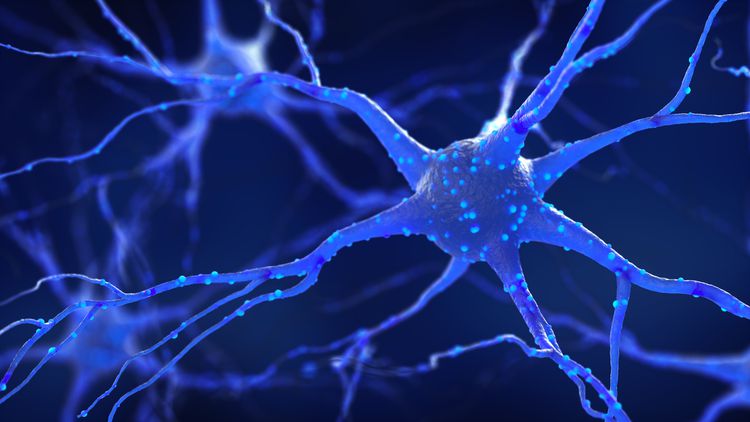}
    \end{subfigure}
    \hfill
    \begin{subfigure}{0.45\textwidth}
        \centering
        \includegraphics[width=\linewidth]{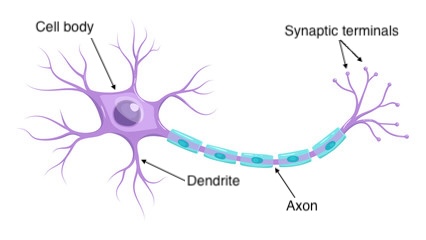}
    \end{subfigure}
    \caption{On the left a 3D-image of a neuron, on the right a brief summary of its structure.}
    \label{fig1}
\end{figure}

Embarking on the marvels of neural networks, the sheer scale of these intricate systems is staggering. In the human brain alone, an astounding \emph{86 billion neurons} form a vast web of connections, orchestrating the symphony of thoughts and actions. Comparatively, the cognitive prowess of elephants shines through their expansive neural landscapes, boasting approximately 257 billion neurons—more than three times that of humans. Dolphins, renowned for their intelligence, navigate the seas with brains equipped with tens billion of neurons, contributing to their advanced problem-solving abilities. These examples illuminate the remarkable diversity and complexity of neural networks across species, and moreover show the huge quantity of neurons that are used in daily actions. In the forthcoming section, we delve into a concise overview of a neuron's behavior, unraveling the fundamental processes that underlie its intricate functioning. 

\subsection{Action Potential}
While our focus is on the Hopfield network model, a brief foray into the neuroscience background is essential. In this section, we won't delve into the depths of neuroscience, but rather aim to illuminate a fundamental concept: \emph{action potential}. Understanding the behavior of neurons and the process that gives rise to a spike lays a crucial foundation. This rudimentary insight serves as a key building block, enriching our comprehension of the Neural Network models and its mathematical underpinnings. The action potential $\mathcal{AP}$, a pivotal concept in neuronal function,
is a neuronal phenomenon in which we see the neuron fires. Transmission of a neuronal signal is entirely dependent of the movement of ions, such as Sodium (Na+), Potassium (K+) and Chloride (CI), that are unequally distributed between the inside and the outside of the cell body. The presence and the movement of these ions creates a chemical gradient across the membrane which we define as electro-chemical gradient $\mathcal{ECG}$.
 \vspace{\baselineskip}

\begin{figure}[htpb]
    \centering
    \includegraphics[width=0.7\textwidth]{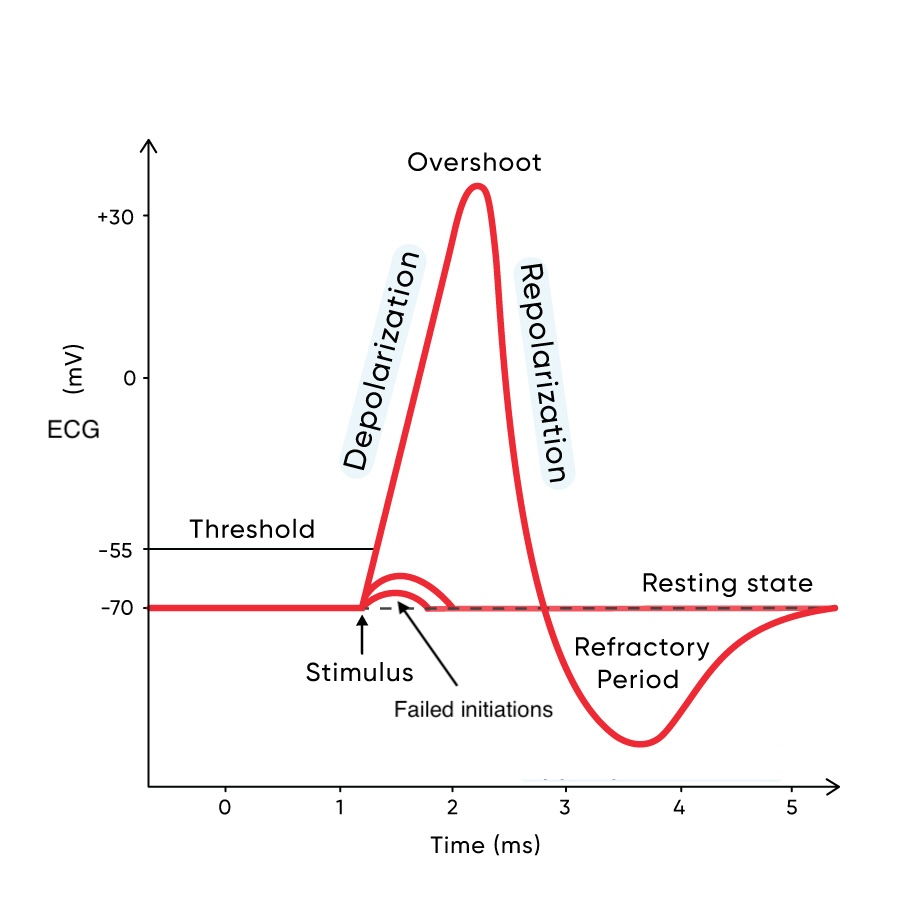}
    \caption{Action Potential}
    \label{fig2}
\end{figure}
At resting state, $\mathcal{ECG}$ hovers around $-70 mV$. However, when a neuron receives a stimulus, the action potential experiences a tendency to increase. Within the biological neuron, a critical threshold exists, typically around $-55 mV$. If ECG exceeds this threshold, then the neuron activates and the process of generating a nerve signal begins. Otherwise, the neuron is unable to fire and tends to return to its resting state. Let's focus on the case where the \textbf{Stimulus} is strong enough to cause ECG to exceed threshold. Then the neuron activates, and the \textbf{Depolarization} process begins. In this state, the neuron begins to interact with ions present inside and outside the membrane in such a way that ECG continues to grow up to $+40 mV$; this is called an overshoot. At this point, the membrane begins to expel positive ions in order to do ECG decrease; this process is called \textbf{Repolarization}. Following these processes, the neuron is able to generate an electrical signal that is sent through the axon to reach the target cell. In more detail, after the depolarization process there is the so-called \textbf{Refractory period}. During this time segment, ECG drops below the resting-state value. This happens because the channels present in the membrane that allow the ions to cross it do not close instantaneously and therefore allow values smaller than $-70mV$ to be reached. The neuron subsequently restabilizes and returns to a resting state. As we can see in figure \ref{fig2} the trend of the $\mathcal{ECG}$ takes on a shape of a spike and allows us to imagine the production of an electrical signal.
Navigating the intricacies of neuroscience, especially outside one's specialization, can be challenging. To enhance clarity, I've included a link to a \href{https://www.youtube.com/watch?v=oa6rvUJlg7o}{video explanation}.
However, two basic concepts on which the associated mathematical models are based should be clear:
\begin{itemize}
    \item Cognitive capacity does not depend on any intensity, but only on \emph{binary values} (and more specifically, by frequency).
    \item There should exists a \emph{threshold} of the network that allows to activate or not neurons.
\end{itemize}

\section{Artificial Neurons}
As we transition into the realm of "Artificial Neurons", our focus shifts from the intricate workings of biological neurons to their mathematical counterparts. These artificial neurons serve as the foundational units in computational models, mirroring the neurological properties we've explored in the preceding section. Embodying the essence of their biological counterparts, artificial neurons encapsulate key features like activation thresholds and the generation of binary outputs, all within a mathematical framework. This section unravels the fundamental principles behind these artificial neurons, bridging the gap between neuroscience and mathematical modeling in the pursuit of understanding neural networks.
    \subsection{McCulloch-Pitts Model}
    The McCulloch-Pitts model, known as \textbf{MP Neuron}, stands as the epitome of simplicity in neural network modeling. Comprising inputs, weights, and a threshold, this foundational model captures the essence of a neuron's basic functionality. Inputs convey signals, each associated with a weight, which collectively influence the neuron's behavior. The critical threshold, akin to the activation threshold in biological neurons, determines whether the neuron fires or remains at rest. As we can see from the figure \ref{fig3}, the model can be briefly summarized in the equation
    \begin{equation}
        y = \Theta \left( \sum_{k=1}^{N} J_{k} S_{k} - U^{*} \right)
    \end{equation}
    \begin{figure}[htpb]
    \centering
    \includegraphics[width=0.7\textwidth]{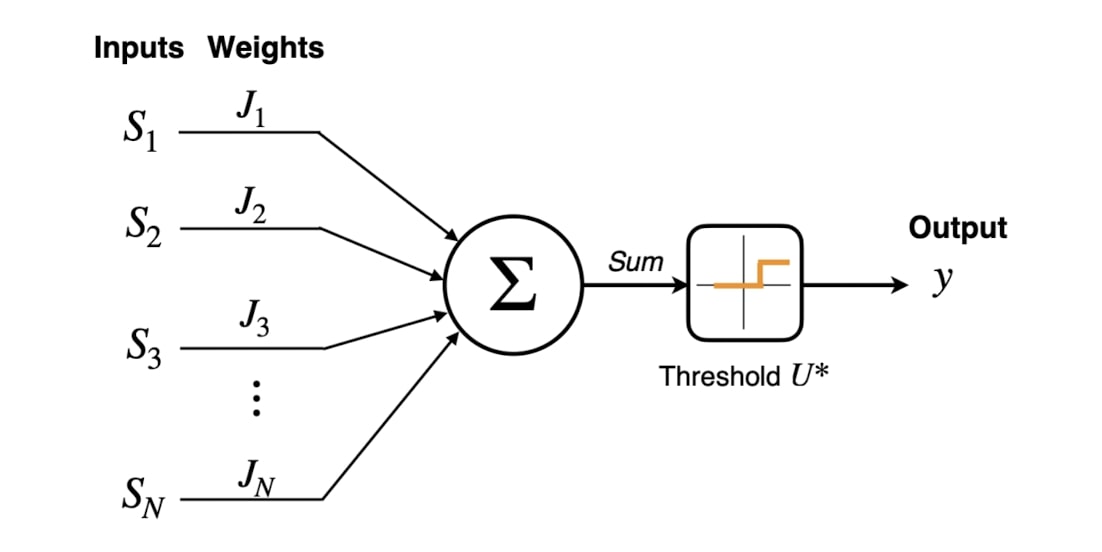}
    \caption{MP model}
    \label{fig3}
\end{figure}
    where $\Theta$ is the \emph{Heaviside function}, $N$ is  number of inputs $S_{k}$, $J_{k}$ are the synaptic weights and $U^{*}$ is the neuron threshold.
    This general model obviously embodies the main behavioral characteristics of biological neurons. More specifically, a network of MP neurons can perform any single-out mapping $M : \Omega \subset \{0,1\}^{N} \rightarrow \{0,1\}$; for instance, any boolean function of $N$ variables, can be expressed in terms of AND,OR,NOT operations $(\land,\lor,\lnot)$. These function can be easily implemented with a network of two MP neurons (only one in the case of $\lnot$). However, if N is greater than 1, there is a counterexample that shows how a single layer of MP neurons is not sufficient to approximate any objective function M : XOR operation. For this reason, scientific attention has shifted to
    \textbf{MP Multilayer Networks}. Indeed, the XOR function can be performed starting from a neural network composed of two layers, each with two MP neurons. If now we assume that we don't know the function $M$ but we have a \emph{training set} $TS = \{x_{i},M(x_{i}) \}_{i=1}^{P} \subset \Omega \times \{0,1\}$, then there exists an algorithm knows as \textbf{Perceptron} that allow us to train an MP neuron in order to emulate as best as possible the target value M($\cdot$):

    \begin{algorithm}
    \caption{Perceptron Learning}
    \begin{algorithmic}[1]
        \State Define $y = y(x) = \Theta( J \cdot x - U^{*} )$ with randomly chosen parameters $\{J_{k}\}_{k=1}^{N}$ and $U^{*}$
        \State $i = 1$
        \While{termination condition not reached}
            \State \textbf{if} $M(x_i) = y(x_i)$ \textbf{then} keep going
            \State \textbf{if} $M(x_i)=0 \land y(x_i) = 1$ \textbf{then} $U^{*} = U^{*}+1$ , $J = J - x_i$
            \State \textbf{if} $M(x_i)=1 \land y(x_i) = 0$ \textbf{then} $U^{*} = U^{*}-1$ , $J = J + x_i$
            \State $i = i+1$
        \EndWhile
    \end{algorithmic}
    \end{algorithm}
    A possible termination condition is, for example, $y(x_i)=M(x_i) \quad \forall x_i \in TS$. \\
    Therefore the algorithm ends successfully only if $\Omega$ is \emph{linearly separable}; if not, there are methods that transform $\Omega$ into a linearly separable space in order to train the neuron correctly.

\subsection{Neural Networks}
\label{2.2}
    From now on, we want to study the model of neural networks, in which there is mutual information between inputs and outputs. Let's consider a network with $N$ neurons $S_1 ... S_N$; we denote with $J_{ij}$ the synaptic weight between the neurons $S_i$ and $S_j$.
    Since the network is no longer feed-forward, we are interested in expressing the state of each individual neuron as a function of time $t$. More specifically, if we assume that we know the neuron states $S_1(t) ... S_N(t)$ then the behavior of $i$-th neuron can be expressed by the equation
    \begin{equation}
        S_i(t + \Delta t ) = \Theta \left( \sum_{k=1}^{N} J_{ik}  S_k(t) - U_i^{*}\right)
    \end{equation}
    where $U_i^{*}$ is the threshold of the neuron $S_i$. However, this modeling turns out to be a bit unrealistic, as it does not take into account the fact that the neuron's threshold could vary over time. Indeed, we consider the \textbf{Stochastic Neurons} 
    \begin{equation}
    \left\{
    \begin{aligned}
        U_i^{*} (t) &= U_i^{*} - \dfrac{T}{2} z_i(t) \\
        S_i(t + &\Delta t ) = \Theta \left( \sum_{k=1}^{N} J_{ik}  S_k(t) - U_i^{*}(t) \right)
    \end{aligned}
    \right.
    \end{equation}
    with the \emph{noise term} such that $\mathbb{E}\left(z_i(t)\right) = 0$ and $\mathbb{E}\left(z_i(t)^2\right) = 1$; the \emph{temperature} $T$ is a very important control parameter, which plays a central role in the computational and convergence properties of the Hopfield model. A second convenient traslation is to redefine neurons such that they have values in $\{+1,-1\}$, so called \textbf{Ising Neurons}
    \begin{equation}
    \left\{
    \begin{aligned}
        \sigma_i&(t) = 2 S_i(t) - 1 \\
        U_i^{*} &= \frac{1}{2} \left( \sum_{k=1}^N J_{ik} - h_i \right)
    \end{aligned}
    \right.
    \end{equation}
    where $\sigma_i(t) \in \{-1,+1\}$ and $\{h_i\}_{i=1}^N$ are the biases. Let's denote the \emph{local field} acting on the neuron $\sigma_i$ as  
    \begin{equation*}
        \varphi_i(t) := \sum_{k=1}^N J_{ik} \sigma_k(t) + h_i
    \end{equation*}
    After some simple calculations, we obtain the generic formula
    \begin{equation}
    \label{eq5}
        \sigma_i(t + \Delta t ) = \operatorname{sgn}\left(\varphi_i(t) + T z_i(t)\right) 
    \end{equation}
    with $\operatorname{sgn}(x) = 2 \Theta(x) -1$; this will be the notation we will use as the article continues. The probability to find a neuron state $\sigma_i(t + \Delta t )$ can be expressed in terms of the noise distribution $\mathcal{P}(z)$; in the case of \emph{symmetric distribution}, we have
    \begin{equation}
    \label{eq6}
        \mathbb{P} \left( \sigma_i(t + \Delta t ) = \pm 1 \right) = g \left( \pm \dfrac{\varphi_i(t)}{T} \right) \equiv 
        \int_{-\infty}^{\pm \dfrac{\varphi_i(t)}{T}}
        \mathcal{P}(z)\,dz
    \end{equation}
    where $g$ is the \emph{cumulative distribution function}. A natural choice is to consider the distribution of a Standard Gaussian, whose associated \emph{cdf} is $g(x) = \frac{1}{2} ( 1 + \operatorname{erf}(\dfrac{x}{\sqrt{2}}))$. Another plausible choice is 
    \begin{equation}
    \left\{
    \begin{aligned}
        \mathcal{P}(z) &= \frac{1}{2} ( 1 - \operatorname{tanh}^2(z)) \\
        g(z) &= \frac{1}{2} ( 1 + \operatorname{tanh}(z) )
    \end{aligned}
    \right.
    \end{equation}
    \begin{figure}[htpb]
    \centering
    \includegraphics[width=0.4\textwidth]{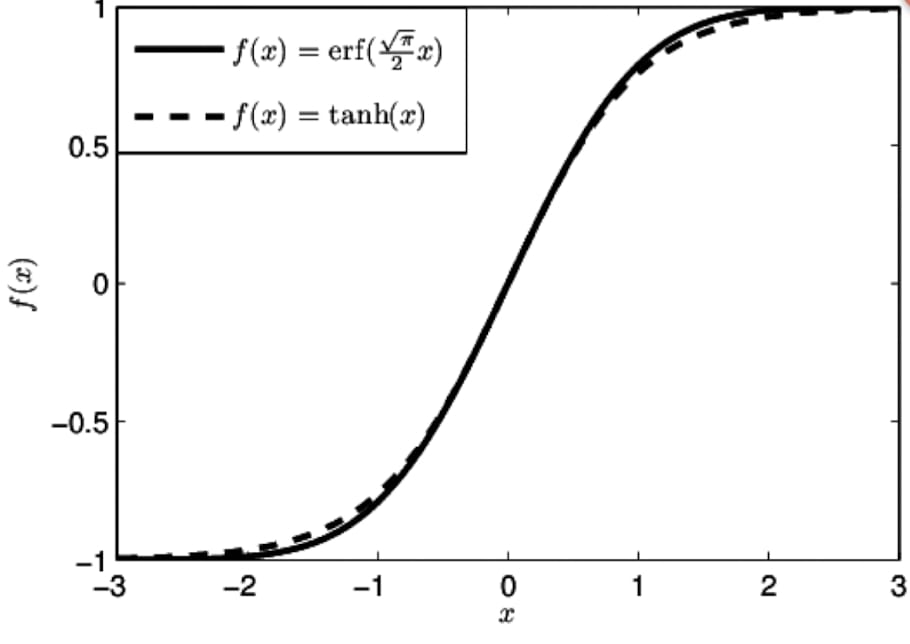}
    \caption{Comparison between ${cdf}_s$}
    \label{fig4}
    \end{figure}
    As we can see from the figure \ref{fig4}, the two possible choices are very similar on a numerical level. Notice that $T$  controls the impact of the noise on the model; in fact, if $T=0$ then the process is deterministic, while if $T \rightarrow \infty$ then  \begin{equation*}
        \mathbb{P} \left( \sigma_i(t + \Delta t ) = \pm 1 \right) = g (0) = \frac{1}{2}
    \end{equation*}
    that is, the process is fully-random. If we assume $T \neq 0$, we can see from equation \ref{eq6} that the microscopic laws governing the \emph{spin vector} $ \bm{\sigma} = (\sigma_1 ... \sigma_N)$ are defined as a stochastic alignment to the local field $\bm{\varphi} = (\varphi_1(\bm{\sigma}) ... \varphi_N(\bm{\sigma}))$; as a matter of fact, if $\varphi_1(\bm{\sigma}) > 0$ then $\mathbb{P} \left( \sigma_1(t + \Delta t ) =  1 \right) > \frac{1}{2} $, while if $\varphi_1(\bm{\sigma}) < 0$ then $\mathbb{P} \left( \sigma_1(t + \Delta t ) =  1 \right) < \frac{1}{2}$.

    \subsection{Noiseless Networks}
    Now we focus on noiseless dynamics, which can be divided into two types: 
    \begin{itemize}
    \item \textbf{Parallel dynamics}, represented by 
    \begin{equation}
        \sigma_i(t + \Delta t ) = \operatorname{sgn}\left(\sum_{k=1}^N J_{ik} \sigma_k(t) + h_i\right) \quad \forall i \in \{1..N\}
    \end{equation}
    \item \textbf{Sequential dynamics}, represented by
        \begin{equation}
         \label{eq9}
        \left\{
        \begin{aligned}
            &\text{choose randomly  } i \text{  in } \{1..N\} \\
            &\sigma_i(t + \Delta t ) = \operatorname{sgn}\left(\sum_{k=1}^N J_{ik} \sigma_k(t) + h_i\right)
        \end{aligned}
        \right.
        \end{equation}
    \end{itemize}
    Therefore, starting from an initial configuration $\bm{\sigma}(0)$, the following sequence is obtained
    \begin{equation}
    \label{eq10}
        \bm{\sigma}(0) \rightarrow \bm{\sigma}(1) \rightarrow \bm{\sigma}(2) \rightarrow \text{...}
    \end{equation}
    and we are hopeful that the sequence tends towards an \emph{attractor} $\bm{\sigma^{*}}$ or in a limit cycle. Parallel dynamics and sequential dynamics (with some improvements) always evolve into an attractor, that is a limit cycle of period less than or equal to $2^N$. However, we will focus on sequential dynamics since it is evident that, for very large $N$, parallel dynamics turns out to be very expensive from a computational point of view.

    \begin{proposition}
        Let's consider a noiseless sequential dynamic that has the following properties:
        \begin{enumerate}
        [label=(\roman*)]
        \item Symmetric interactions, i.e. $J_{ik} = J_{ki} \ \ \forall i,k \in \{1...N\}$
        \item Non-negative self-interactions, i.e. $J_{ii} \geq 0 \ \forall i \in \{1...N\}$
        \item Stationary external field $\bm{h} = (h_1,...,h_N)$
    \end{enumerate}
    Then the function 
    \begin{equation}
    \label{eq11}
        L(\bm{\sigma}; N, J, \bm{h}) := -\frac{1}{2} \sum_{k,l=1}^N \sigma_k J_{kl} \sigma_j - \sum_{k=1}^N h_k \sigma_k
    \end{equation}
    is a \textbf{Ljapunov function} with respect to the dynamic written above.
    \end{proposition}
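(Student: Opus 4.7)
The plan is to verify the two defining properties of a Ljapunov function: that $L$ is bounded below on the state space, and that it is non-increasing along every trajectory of the sequential dynamics in equation \ref{eq9}. Boundedness is immediate since $\bm{\sigma}$ ranges over the finite set $\{-1,+1\}^N$, so $L$ attains a finite minimum. The substance of the argument is monotonicity.

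First I would fix a single update step: suppose at time $t$ the index $i$ is selected, and denote $\bm{\sigma} = \bm{\sigma}(t)$, $\bm{\sigma}' = \bm{\sigma}(t+\Delta t)$, so that $\sigma_k' = \sigma_k$ for every $k \neq i$ and $\sigma_i' = \operatorname{sgn}(\varphi_i)$ with $\varphi_i = \sum_k J_{ik}\sigma_k + h_i$. I would then compute $\Delta L := L(\bm{\sigma}') - L(\bm{\sigma})$ by isolating the terms of the quadratic form and of the linear term that actually change. Only pairs $(k,l)$ with $k=i$ or $l=i$ contribute. The diagonal term $(k,l)=(i,i)$ contributes nothing because $\sigma_i^2 = (\sigma_i')^2 = 1$. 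Using hypothesis (i) to identify $J_{il}$ with $J_{li}$, the off-diagonal contributions combine to give a single factor of $(\sigma_i' - \sigma_i)\sum_{l \neq i} J_{il}\sigma_l$.

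Next I would rewrite $\sum_{l \neq i} J_{il}\sigma_l = \varphi_i - h_i - J_{ii}\sigma_i$ to absorb the external field cleanly, reaching the compact expression
\begin{equation*}
\Delta L \;=\; -(\sigma_i' - \sigma_i)\,\varphi_i \;+\; J_{ii}\,\sigma_i(\sigma_i' - \sigma_i).
\end{equation*}
Now I would do the two-case analysis. If the update produces no flip, $\sigma_i' = \sigma_i$ and $\Delta L = 0$. If it produces a flip, then $\sigma_i' = -\sigma_i$, which combined with $\sigma_i' = \operatorname{sgn}(\varphi_i)$ yields $\sigma_i\,\varphi_i = -|\varphi_i|$ and $\sigma_i(\sigma_i' - \sigma_i) = -2$. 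Substituting gives $\Delta L = -2|\varphi_i| - 2J_{ii}$, which is $\leq 0$ precisely because of hypothesis (ii). Hypothesis (iii) enters implicitly by guaranteeing that $L$ does not depend on $t$, so the same function serves at every step.

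The only real subtlety I foresee is the role of the diagonal term: without the assumption $J_{ii} \geq 0$, a self-interaction could in principle make the flip increase $L$ even when $\varphi_i$ and $\sigma_i$ disagree, so the proof must track that term carefully rather than discarding it using $\sigma_i^2 = 1$ too early. Once monotonicity is established, I would briefly observe that the combination of boundedness below and the fact that $L$ takes only finitely many values on $\{-1,+1\}^N$ forces $L(\bm{\sigma}(t))$ to be eventually constant, which is consistent with the convergence to an attractor announced just before the proposition in equation \ref{eq10}.
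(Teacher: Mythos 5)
Your proof is correct and follows essentially the same route as the paper: compute $\Delta L$ for a single-spin update, use the symmetry of $J$ to collapse the two off-diagonal sums, invoke $\sigma_i'=\operatorname{sgn}(\varphi_i)$ to obtain the term $-2|\varphi_i|$, and use $J_{ii}\ge 0$ to control the diagonal contribution, arriving at the same expression $\Delta L=-2|\varphi_i|-2J_{ii}\le 0$. Your explicit no-flip/flip case split and the remark on boundedness below are minor additions; the paper's displayed bound omits $h_i$ inside the absolute value, which your version correctly retains.
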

    \begin{proof}
        We have to show that $\Delta L(\bm{\sigma}) = L(\bm{\sigma'}) - L(\bm{\sigma}) \leq 0 \ \forall \bm{\sigma},\bm{\sigma'}$, where the two configurations can only be different on the state of neuron i. Wlog, we can consider $\bm{\sigma'} \neq \bm{\sigma}$ with $\sigma_i' = - \sigma_i$. Then we have
        \begin{equation*}
            \begin{aligned}
              \Delta L(\bm{\sigma}) &= -\frac{1}{2} \sum_{k=1}^N J_{ki} (\sigma'_k \sigma'_i - \sigma_k \sigma_i) -\frac{1}{2} \sum_{l=1}^N J_{il} (\sigma'_i \sigma'_l - \sigma_i \sigma_l) - h_i(\sigma'_i - \sigma_i) \\
              &= \sum_{k=1}^N J_{ki} \sigma_k \sigma_i + \sum_{l=1}^N J_{il} \sigma_i \sigma_l - 2 J_{ii} + 2 h_i \sigma_i \\
              &= 2 \sigma_i \left(\sum_{k=1}^N J_{ki} \sigma_k  + h_i\right) - 2 J_{ii} = -2 \bigg| \sum_{k=1}^N J_{ki} \sigma_k  \bigg| - 2 J_{ii} \leq 0
            \end{aligned}
        \end{equation*}  
        where we used the fact that $\psi \operatorname{sgn}(\psi) = |\psi|$ with $\psi$ generic function; in our case, we have $\psi = \sum_{k=1}^N J_{ki} \sigma_k + h_i$ and $\operatorname{sgn}(\psi) = \sigma_i' = -\sigma_i$ hence the thesis.
    \end{proof}

    The result just demonstrated shows a key concept that underlies
    spin-glass models of neural networks and their application in patterns recognition: we want an \emph{energy function} which has fixed points (previously called attractors) corresponding to the patterns we want the network to memorize. In the case of a \textbf{single Pattern} $\bm{\xi} = (\xi_1,...,\xi_N) \in \{-1,+1\}^N$, we define the synaptic weights according to the \emph{Hebbian rule}
    \begin{equation}
    	J_{ij} = \dfrac{\xi_{i} \xi_{j}}{N} \quad  \forall i,j = 1...N
    \end{equation}
    which is connected to the concept  ``cells that fire together they wire together''. Moreover, we can assume that $h_{i} = h \ \forall i$,  because there is no reason why different neurons should be feel different external fields. Let's introduce the new variables $\tau_{i } = \xi_{i} \sigma_{i}$ and $\nu_{i} = \xi_{i} h$; multiplying the equation \ref{eq9} by $\xi_{i}$ we obtain the dynamic
    \begin{equation*}
    	\tau_{i}(t+1) = \operatorname{sgn} \left(\frac{1}{N} \sum_{k=1}^{N} \tau_{k} + \nu_{k} \right)
    \end{equation*}
    and summing over $i$, we can rewrite it in terms of the \emph{average activity} $m(t) := \dfrac{1}{N} \sum_{i=1}^{N} \tau_{i}(t)$
    \begin{equation*}
    	m(t+1) = \dfrac{N_{+}}{N}   \operatorname{sgn} ( m(t) + |h| ) + \dfrac{N-N_{+}}{N}  \operatorname{sgn} (m(t) - |h|)
    \end{equation*}
    where $N_{+}$ is the number of positive entries of the pattern $\bm{\xi}$. This modeling fits perfectly with the retrieval task; indeed, the network have three possible behavior:
    \begin{itemize}
    	\item if $m(0) > |h|$, $m(t)=1$ then the network retrieve in 1 step the pattern  $\bm{\xi}$
	\item if $m(0) < -|h|$, $m(t)=-1$ then we have retrieval in 1 step of the pattern  $-\bm{\xi}$
	\item if $|m(0)| < |h|$, $\bm{\sigma}$ tends to a configuration $\bm{\xi}^{0}$ where all neurons have either $+$ or $-$ sign, depending on which sign prevails in the pattern; so $\bm{\xi}^{0} = \operatorname{sgn} \left( \sum_{i=1}^{N} \xi_{i} \right) \bm{1}$.
    \end{itemize}
    In other words, the function $m$ measures the alignment  between  the neuronal configuration $\bm{\sigma}$ and the target pattern $\bm{\xi}$. Notice that this model can reconstruct the pattern only if the initial state $\bm{\sigma}(0)$  is sufficiently close to the associated fixed point, i.e. $m(\bm{\sigma}(0)) > |h|$; otherwise, the network finds two possible configurations associated with the other two fixed points.  \\
    In the case of \textbf{multiple Patterns} $\{\bm{\xi}^{\mu} \}_{\mu=1}^{P}$ with $\bm{\xi}^{\mu} \in \{-1,+1\}^{N}$ and $P>1$, we want a Ljapunov function that have stationary states corresponding to the patterns we want to store. For perfect retrieval, we assume that the patterns are orthogonal, i.e. $\bm{\xi}^{\mu}  \cdot \bm{\xi}^{\nu} = 0 \ \forall \mu \neq \nu$; therefore, if we assume that there are no self-interactions and no external fields, Hebb's rule becomes
    \begin{equation}
    	J_{ik} = \dfrac{1}{N}  (1- \delta_{ik}) \sum_{\mu=1}^{P} \xi_{i}^{\mu} \xi_{k}^{\mu}
    \end{equation}
    thus obtaining the Lyapunov function
    \begin{equation*}
    	L(\bm{\sigma}; N,P,J,\bm{h}=0) = \dfrac{P}{2} - \frac{1}{2} \sum_{\mu=1}^{P} \left[ \dfrac{1}{\sqrt N } \sum_{i=1}^{N} \xi_{i}^{\mu} \sigma_{i} \right]
    \end{equation*}
    such that $L(\bm{\sigma}) \geq L(\pm \bm{\xi}^{\mu}) \ \forall \mu = 1...P$, or rather all patterns (and their opposites) are stationary points of dynamics. This will be the initial setting of the Hopfield model, which is known for its associative memory capacity with $P >1$ patterns.
    
        \subsection{Neural Processes as Markov chains}
        Now, we want to analyse the previously dynamics in probabilistic terms using \emph{Markov Chains}. In this section, we will only state the definitions and results that are of interest to us for the purpose of translating neural dynamics in terms of Markov chains. 
        \subsubsection{Brief Overview of Markov Chains}
        \begin{definition}
        	A discrete time Markov chain at values in a discrete set $\mathcal{S}$ is a sequence of random variables $X = \{X_{t}\}_{t \geq 1}$ such that
        \begin{equation*}
        		\mathbb{P} (X_{t+1} = j | X_{t} =i, X_{t-1}=i_{t-1}...,X_{0} = i_{0}) = \mathbb{P} (X_{t+1} = j | X_{t} =i) =: W_{ij}
        \end{equation*}
    i.e. that the probability of finding oneself in state $j$ depends solely on the state at the previous time $i$. 
        \end{definition}
        The probability $W_{ij}$ can be interpreted as the component of a  \emph{stochastic transfer matrix $W$} where
    \begin{equation*}
    	\sum_{j \in \mathcal{S}} W_{ij} = 1 \ \   \forall i  \ \  \text{and} \ \ W_{ij} \in [0,1] . 
    \end{equation*}
    Therefore, the Markov chain is in bi-univocal correspondence with its transfer matrix. Furthermore, we have    \begin{equation*}
    	p_{t}(X=j | X_{0}) = \left( p_{t-1}(X = j | X_{0}) W \right)_{i} = ... = \left(p_{0}(X = j|X_{0}) W^{t} \right)_{i} 
    \end{equation*}
    \begin{definition}
    	A MC is said ergodic if there exists an integer $\tau$ such that for all pairs of states $(i,j) \in \mathcal{S} \times \mathcal{S}$ we have that 
	\begin{equation*}
		p_{t}( X = j | X_{0} = i) > 0 \ \ \forall t > \tau .
	\end{equation*}
    \end{definition}
    \begin{definition}
    	A distribution $p(X|X_{0})$ is invariant if $p(X|X_{0}) W = p(X|X_{0})$.
    \end{definition}
    \begin{theorem}
    	For any ergodic Markov chain $X$, there exists an unique invariant distribution $p_{\infty}(X|X_{0})$ that is the principal left eigenvector of $W$, such that if $\nu(i,t)$ is the number of visits of state $i$ in $t$ steps then $\lim_{t \rightarrow \infty} \dfrac{\nu(i,t)}{t} = p_{\infty}$ .
    \end{theorem}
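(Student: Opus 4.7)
My plan is to decompose the statement into three parts: existence of an invariant distribution, uniqueness under ergodicity, and the long-run frequency identification. The backbone of the argument is the Perron--Frobenius theorem applied to the stochastic matrix $W$, supplemented by a law-of-large-numbers type argument for the frequency claim.

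First I would establish existence. The constraint $\sum_{j} W_{ij} = 1$ says exactly that the all-ones column vector $\mathbf{1}$ is a right eigenvector of $W$ with eigenvalue $1$, so $1 \in \operatorname{spec}(W) = \operatorname{spec}(W^{T})$. Hence there exists a nonzero left eigenvector $p$ with $pW = p$. To ensure $p$ can be taken with nonnegative entries, I would invoke Perron--Frobenius applied to the nonnegative matrix $W$: its spectral radius equals $1$ (any eigenvalue satisfies $|\lambda|\leq \max_i \sum_j W_{ij}=1$, and $1$ is attained), and one of the associated eigenvectors is entrywise nonnegative. Normalizing by $\sum_i p_i$ yields an invariant probability distribution $p_{\infty}$.

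Next I would use ergodicity for uniqueness. Ergodicity provides $\tau$ such that every entry of $W^{\tau}$ is strictly positive, so $W$ is \emph{primitive}. The Perron--Frobenius theorem for primitive nonnegative matrices then guarantees that the spectral radius $1$ is a simple eigenvalue and is strictly larger in modulus than every other eigenvalue. Simplicity gives uniqueness of $p_{\infty}$, and the spectral gap yields the geometric convergence
\begin{equation*}
    W^{t} \longrightarrow \mathbf{1}\, p_{\infty} \qquad \text{as } t \to \infty,
\end{equation*}
so that $p_{t}(X=j \mid X_{0}=i) \to (p_{\infty})_{j}$ for every initial state $i$. This identifies $p_{\infty}$ as the principal left eigenvector.

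Finally, for $\nu(i,t)/t \to p_{\infty}(i)$, I would write $\nu(i,t) = \sum_{s=1}^{t} \mathds{1}_{\{X_{s}=i\}}$ and first note that, by the previous step, $\mathbb{E}[\nu(i,t)/t] \to p_{\infty}(i)$ in the Cesàro sense (in fact exponentially fast thanks to the spectral gap). Upgrading expected convergence to almost-sure convergence is the main obstacle, because independence fails. I would handle it by a regenerative/renewal argument: pick a reference state $i_0$ and decompose the trajectory into i.i.d.\ excursions between successive returns to $i_0$ (finite-mean return times by ergodicity), then apply the classical strong law of large numbers to the excursion lengths and to the number of visits of $i$ within each excursion; combining these ratios yields $\nu(i,t)/t \to p_{\infty}(i)$ almost surely, which is the desired conclusion.
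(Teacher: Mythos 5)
The paper does not prove this theorem at all: it appears in the ``Brief Overview of Markov Chains'' subsection, which the author explicitly prefaces with the remark that only definitions and results will be \emph{stated}, the theorem being a classical fact about finite ergodic chains imported from the literature. So there is no internal proof to compare against, and your plan has to be judged on its own merits. On those merits it is the standard, correct textbook argument and the decomposition into existence, uniqueness/convergence, and occupation frequency is the right one. Existence via $W\mathbf{1}=\mathbf{1}$ plus Perron--Frobenius for nonnegative matrices is fine; the paper's definition of ergodicity (positivity of $p_{t}(X=j\mid X_{0}=i)$ for \emph{all} $t>\tau$) does give you a power of $W$ with strictly positive entries, hence primitivity, hence simplicity of the eigenvalue $1$ and the spectral-gap convergence $W^{t}\to\mathbf{1}\,p_{\infty}$; and the regenerative argument is the standard way to upgrade the Ces\`aro convergence of $\mathbb{E}[\nu(i,t)/t]$ to an almost-sure statement.

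One step in your third part is left implicit and is the only place where real work remains: after the SLLN gives $\nu(i,t)/t \to$ (expected visits to $i$ per excursion)$/$(expected excursion length), you still must identify that ratio with $p_{\infty}(i)$. This is Kac's formula, i.e.\ the fact that the measure $\mu(i)=\mathbb{E}_{i_{0}}\bigl[\sum_{s<T_{i_{0}}}\mathds{1}_{\{X_{s}=i\}}\bigr]$ built from excursion occupation times is invariant, so that after normalization it must coincide with the unique $p_{\infty}$ from the first two parts. Without that identification the renewal argument only shows the frequencies converge to \emph{some} limit. Alternatively, since you already have the spectral gap, you could bypass regeneration entirely: the exponential decay of correlations $\lvert\mathbb{P}(X_{s}=i,\,X_{s'}=i)-p_{\infty}(i)^{2}\rvert\leq C\lambda^{\lvert s-s'\rvert}$ gives $\operatorname{Var}(\nu(i,t)/t)=O(1/t)$, and a Borel--Cantelli argument along a subsequence plus monotonicity of $\nu$ yields the almost-sure limit; that route uses only what you have already established and avoids invoking the strong Markov property.
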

    \begin{theorem}
    	Let X be an ergodic MC with invariant distribution $p_{\infty}$. Then 
	\begin{equation*}
		p_{t} (X|X_{0}) = p_{0}(X|X_{0})W^{t} \xrightarrow{t \rightarrow \infty}  p_{\infty}(X) 
	\end{equation*}
	independently of the initial distribution $p_{0}(X|X_{0})$.
   \end{theorem}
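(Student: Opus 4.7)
The plan is to derive geometric convergence in total variation distance by exploiting ergodicity to obtain a Doeblin-type minorisation of $W^{\tau}$, and then to identify the limit as $p_{\infty}$ using its invariance, as supplied by the previous theorem. The conceptual picture is that $W$ acts as a contraction on the simplex of probability distributions, with the unique fixed point being $p_{\infty}$.

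First I would use the ergodicity assumption to fix $\tau$ such that $(W^{\tau})_{ij} > 0$ for every $(i,j) \in \mathcal{S}\times\mathcal{S}$. On a finite state space this yields a uniform lower bound $\epsilon := \min_{i,j}(W^{\tau})_{ij} > 0$, and I can decompose
$$W^{\tau}_{ij} = |\mathcal{S}|\epsilon \cdot \tfrac{1}{|\mathcal{S}|} + (1-|\mathcal{S}|\epsilon)\,R_{ij},$$
where $R$ is a genuine stochastic kernel whose output no longer depends on the starting row only through a factor $(1-|\mathcal{S}|\epsilon)$. From this Doeblin splitting, a direct estimate gives the contraction
$$\|\mu W^{\tau} - \nu W^{\tau}\|_{TV} \leq (1-|\mathcal{S}|\epsilon)\,\|\mu - \nu\|_{TV}$$
for any two initial distributions $\mu, \nu$, equivalently viewable as a coupling argument in which paths coalesce at each $\tau$-block with probability at least $|\mathcal{S}|\epsilon$.

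Second I would specialise the contraction to $\nu = p_{\infty}$. The previous theorem supplies $p_{\infty} W = p_{\infty}$, hence $p_{\infty} W^{\tau} = p_{\infty}$, and iteration yields
$$\|p_{k\tau}(\cdot\,|\,X_0) - p_{\infty}\|_{TV} \leq (1 - |\mathcal{S}|\epsilon)^{k}\,\|p_0 - p_{\infty}\|_{TV}.$$
Using the non-expansion $\|\mu W - \nu W\|_{TV} \leq \|\mu - \nu\|_{TV}$ to fill in intermediate times, this gives a uniform geometric bound $\|p_t - p_{\infty}\|_{TV} \to 0$ as $t \to \infty$. Because the right-hand side does not depend on $p_0$ beyond the trivial factor $\|p_0 - p_{\infty}\|_{TV} \leq 2$, the convergence holds independently of the initial distribution, which is precisely the claim.

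The main obstacle is the minorisation step: one must verify that the decomposition above defines a bona fide stochastic residual $R$, and in particular that $|\mathcal{S}|\epsilon \leq 1$ (immediate from $\sum_j W^{\tau}_{ij} = 1$). This is essentially bookkeeping but implicitly relies on the finiteness of $\mathcal{S}$, which is consistent with the neural-dynamics setting of the paper, where $\mathcal{S} = \{-1,+1\}^N$. An alternative route, should one wish to bypass the coupling estimate, is to invoke the Perron--Frobenius theorem on the strictly positive stochastic matrix $W^{\tau}$: its dominant eigenvalue $1$ is simple, all other eigenvalues have modulus strictly less than $1$, and the associated left eigenvector is $p_{\infty}$, so $W^{k\tau} \to \mathbf{1}\,p_{\infty}^{\top}$ and $p_0 W^{k\tau} \to p_{\infty}$ for every probability vector $p_0$.
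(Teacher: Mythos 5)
Your argument is correct, but there is nothing in the paper to compare it against: the Markov-chain subsection explicitly states only ``the definitions and results that are of interest,'' and this theorem is quoted without proof as a standard fact. Taken on its own merits, your Doeblin minorisation is a complete and appropriate proof in the finite-state setting the paper actually uses ($\mathcal{S}=\{-1,+1\}^{N}$): the paper's definition of ergodicity gives a $\tau$ with $(W^{t})_{ij}>0$ for all $t>\tau$, so $\epsilon=\min_{i,j}(W^{\tau+1})_{ij}>0$ exists, the splitting $W^{\tau+1}_{ij}=|\mathcal{S}|\epsilon\cdot\frac{1}{|\mathcal{S}|}+(1-|\mathcal{S}|\epsilon)R_{ij}$ yields a bona fide stochastic $R$, and since $(\mu-\nu)$ annihilates the rank-one uniform part, the contraction factor $(1-|\mathcal{S}|\epsilon)$ follows, with $p_{\infty}W=p_{\infty}$ identifying the limit via the preceding theorem. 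Two cosmetic points worth tightening: handle the degenerate case $|\mathcal{S}|\epsilon=1$ separately (the residual $R$ is then undefined, but convergence is immediate in one block), and note that the non-expansion $\norm{\mu W-\nu W}_{TV}\leq\norm{\mu-\nu}_{TV}$ you use to fill in intermediate times is exactly the $\ell^{1}$-contractivity of stochastic matrices on zero-sum vectors. Your alternative Perron--Frobenius route is equally valid and is in fact closer in spirit to how the paper later argues in the ergodicity-breaking section, where it expands $W$ spectrally and attributes convergence to the gap below the simple eigenvalue $1$; citing that spectral picture would connect your proof most directly to the machinery the paper actually deploys.
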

   \begin{definition}
   	A stochastic matrix $W$ and a measure $p(X)$ are said to be in detailed balance if 
	\begin{equation*}
		p(X=i) W_{ij} = W_{ji} p(X=j) \ \ \forall i,j \in \mathcal{S}.
	\end{equation*}
   \end{definition}
   
   \subsubsection{Translating Neural dynamics in terms of Markov Chains}
   \label{2.4.2}
   	As analysed in section \ref{2.2}, we have that sequential dynamics with noise can be expressed by 
	\begin{equation}
	\label{eq14}
    \left\{
    	\begin{aligned}
        		&\text{choose randomly } i \text{ in } \{1..N\} \\
        		&\mathbb{P} (\bm{\sigma}(t + \Delta t )) = \frac{1}{2} + \frac{1}{2} \sigma_{i}(t + \Delta t ) \operatorname{tanh} ( \beta 	\varphi_{i} (\bm{\sigma}(t))) 
   	 \end{aligned}
    \right.
\end{equation}
	where $\beta = \frac{1}{T}$ and we use the fact that $\operatorname{tanh} ( \sigma_{i} \beta \varphi_{i} (\bm{\sigma}(t))) =  \sigma_{i} \operatorname{tanh} (  \beta \varphi_{i} (\bm{\sigma}(t)))$ since
	$\sigma_{i}$ takes values in  $\{-1,+1\}$ and $\operatorname{tanh} $ is odd. If $\bm{\sigma}(t)$ is given, then equation \ref{eq14} becomes
	\begin{equation*}
	p_{t+1}(\bm{\sigma}) = \frac{1}{N} \sum_{i=1}^{N} \left[ \left( \frac{1}{2} + \frac{1}{2} \sigma_{i} \operatorname{tanh} (\beta \varphi_{i}(\bm{\sigma}(t))) \right) \prod_{j \neq i} \delta_{\sigma_{j},\sigma_{j}(t)} \right]
\end{equation*}
where the production tells us that all neurons remain unaffected except neuron $i$, while the summation tells us that during sequential dynamics, (almost) all neurons are given the opportunity to be flipped. On the other hand, if probability $p_{t}(\bm{\sigma})$ is given,  then we get
       \begin{equation}
       \label{eq15}
            	p_{t+1}(\bm{\sigma}) = \sum_{\bm{\sigma'}} W[\bm{\sigma}; \bm{\sigma'}] p_{t}(\bm{\sigma'})
         \end{equation}
            with the $2^{N} \times 2^{N}$ transfer matrix given by
         \begin{equation*}
    	W[\bm{\sigma}; \bm{\sigma'}] =  \frac{1}{N} \sum_{i=1}^{N} \left[ \left( \frac{1}{2} + \frac{1}{2} \sigma'_{i} \operatorname{tanh} 		(\beta \varphi_{i}(\bm{\sigma'}) \right) \delta_{\bm{\sigma},\bm{\sigma'}}  + \left( \frac{1}{2} - \frac{1}{2} \sigma'_{i} 			\operatorname{tanh} (\beta \varphi_{i}(\bm{\sigma'}) \right) \delta_{F_{i}(\bm{\sigma}),\bm{\sigma'}} \right]
	\end{equation*}
	where $F_{i} (\bm{\sigma}) = (\sigma_{1},...,\sigma_{i-1},-\sigma_{i},\sigma_{i+1},...,\sigma_{N})$ is the \textbf{flipping operator}. Equation \ref{eq15} is the Markov equation corresponding  to the sequential process $\bm{\sigma}(t) \rightarrow \bm{\sigma}(t+1)$.
	\begin{proposition}
		The process described above by $W$ is ergodic. Then there exists a unique stationary distribution $p_{\infty}$ to which it will converge from any initial distributions over states. This distribution is determined by the stationary condition
		\begin{equation*}
			p_{\infty}(\bm{\sigma}) = \sum_{\bm{\sigma'}} W[\bm{\sigma}; \bm{\sigma'}] p_{\infty}(\bm{\sigma'}) \ \ \forall \bm{\sigma} \in \{-1,+1\}^{N} .
		\end{equation*}
	\end{proposition}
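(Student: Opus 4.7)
The plan is to verify the ergodicity condition (as given in the definition of Section 2.4.1) directly from the explicit form of $W$, and then invoke the theorems of the same subsection to obtain existence, uniqueness, and the stationary equation for $p_\infty$. Throughout I work in the finite-temperature regime $T>0$ (equivalently $\beta<\infty$), so that $|\tanh(\beta\varphi_i(\bm{\sigma'}))|<1$ for every $i$ and every $\bm{\sigma'}$. The crucial consequence is that both the \emph{no-flip coefficient} $\tfrac{1}{2}+\tfrac{1}{2}\sigma'_i\tanh(\beta\varphi_i(\bm{\sigma'}))$ and the \emph{flip coefficient} $\tfrac{1}{2}-\tfrac{1}{2}\sigma'_i\tanh(\beta\varphi_i(\bm{\sigma'}))$ are strictly positive.

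First I would read off from the formula for $W[\bm{\sigma};\bm{\sigma'}]$ two positivity facts. (a) Setting $\bm{\sigma}=\bm{\sigma'}$ gives $W[\bm{\sigma'};\bm{\sigma'}]\ge \frac{1}{N}(\tfrac{1}{2}+\tfrac{1}{2}\sigma'_i\tanh(\beta\varphi_i(\bm{\sigma'})))>0$ for any chosen $i$, so every diagonal entry is strictly positive. (b) Setting $\bm{\sigma}=F_i(\bm{\sigma'})$ for any single index $i$ gives $W[F_i(\bm{\sigma'});\bm{\sigma'}]\ge \frac{1}{N}(\tfrac{1}{2}-\tfrac{1}{2}\sigma'_i\tanh(\beta\varphi_i(\bm{\sigma'})))>0$. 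Hence $W$ has strictly positive weight on the diagonal and on every single-flip neighbour, while all other entries vanish.

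Next I would chain single-flip transitions to connect arbitrary states. Given any pair $\bm{\sigma},\bm{\sigma'}\in\{-1,+1\}^{N}$, let $I\subseteq\{1,\dots,N\}$ be the set of indices on which they differ, with $k=|I|\le N$. Enumerating $I$ in any order and flipping those neurons one at a time produces a chain of configurations linking $\bm{\sigma'}$ to $\bm{\sigma}$ through $k$ single-flip transitions. By fact (b), each transition has strictly positive probability, so $(W^{k})[\bm{\sigma};\bm{\sigma'}]>0$. Using fact (a) to pad any shorter path with self-loops, I get $(W^{t})[\bm{\sigma};\bm{\sigma'}]>0$ for every $t\ge k$, and in particular for every $t\ge N$. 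Taking $\tau=N$ in the ergodicity definition finishes the proof of ergodicity.

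With ergodicity in hand, the theorems of Section 2.4.1 apply verbatim: there exists a unique invariant distribution $p_\infty$, it is the principal left eigenvector of $W$, and $p_t(\bm{\sigma}\mid\bm{\sigma}(0))\to p_\infty(\bm{\sigma})$ independently of the initial distribution; the invariance gives exactly the stated fixed-point equation. The argument is entirely elementary; the only real subtlety is the finite-temperature hypothesis, which keeps the hyperbolic tangents away from $\pm 1$ and thereby prevents any transition weight from collapsing to zero—this is exactly where the noise term $Tz_i(t)$ does its work, and is the one point worth flagging explicitly in the proof.
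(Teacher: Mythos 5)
Your argument is correct. Note, however, that the paper states this proposition without any proof at all --- the text moves straight on to the observation about the $2^N$ linear equations --- so there is no ``paper proof'' to compare against; what you have written supplies the missing justification. Your route is the standard one and it works: for $T>0$ both the no-flip weight $\tfrac12+\tfrac12\sigma'_i\tanh(\beta\varphi_i(\bm{\sigma'}))$ and the flip weight $\tfrac12-\tfrac12\sigma'_i\tanh(\beta\varphi_i(\bm{\sigma'}))$ are strictly positive, so every diagonal entry and every single-flip entry of $W$ is strictly positive; chaining at most $N$ single flips and padding with self-loops gives $(W^{t})[\bm{\sigma};\bm{\sigma'}]>0$ for all $t\geq N$, which is exactly the ergodicity definition of Section 2.4.1 with $\tau=N$, and the two theorems there then deliver existence, uniqueness, convergence, and the stationarity equation. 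Your closing remark is also the right one to flag: the argument uses both $T>0$ and $N<\infty$ in an essential way, which is consistent with the paper's later discussion of ergodicity breaking at $T=0$ or in the thermodynamic limit, where the spectral gap can close and this proposition genuinely fails.
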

	We observe that, to calculate $p_{\infty}(\bm{\sigma})$, we have to solve a system of $2^{N}$ linear equations for $2^{N}$ values of $p_{\infty}$, which would be a very difficult job. This is why we want to impose a strong condition that would simplify the calculations, the \textbf{Detailed Balance} : 
	\begin{equation}
		W[\bm{\sigma}; \bm{\sigma'}] p_{\infty}(\bm{\sigma'}) = W[\bm{\sigma'}; \bm{\sigma}] p_{\infty}(\bm{\sigma}) \ \ \forall \bm{\sigma}, \bm{\sigma'} \in \{-1,+1\}^{N}.
	\end{equation}
	\begin{theorem}
	\label{thm3}
		Let's consider sequential dynamics without self-interactions ($J_{ii} = 0 \ \forall i$). Then the detailed equilibrium is equivalent to the symmetry of the interactions, i.e.
	\begin{equation*}
		J_{ij} = J_{ji} \  \ \forall i,j \iff W[\bm{\sigma}; \bm{\sigma'}] p_{\infty}(\bm{\sigma'}) = W[\bm{\sigma'}; \bm{\sigma}] p_{\infty}(\bm{\sigma}) \ \ \forall \bm{\sigma}, \bm{\sigma'} \in \{-1,+1\}^{N} .
	\end{equation*}		
	Moreover, if detailed balance holds then the equilibrium distribution is given by 
	\begin{equation}
		p_{\infty}(\bm{\sigma}) \propto e^{\frac{- \mathcal{H}(\bm{\sigma})}{T}}
	\end{equation}
	where $\mathcal{H}(\bm{\sigma})$ is the Ljapunov function of the noiseless dynamics given by
	\begin{equation*}
		 \mathcal{H}(\bm{\sigma}) = - \frac{1}{2} \sum_{i,j=1}^{N} \sigma_{i} J_{ij} \sigma_{j}  - \sum_{i=1}^{N} h_{i} \sigma_{i} .
	\end{equation*}
	\end{theorem}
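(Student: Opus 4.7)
The plan is to reduce the $2^N \times 2^N$ matrix equation expressing detailed balance to pointwise relations on pairs $(\bm{\sigma},\bm{\sigma'})$ related by a single spin flip, exploit the absence of self-interactions to simplify the local field under a flip, and then identify the Boltzmann weight as (up to normalization) the only density compatible with symmetric couplings. As a preliminary observation, $W[\bm{\sigma};\bm{\sigma'}]$ vanishes whenever $\bm{\sigma}$ and $\bm{\sigma'}$ differ in more than one coordinate, and detailed balance is trivial on the diagonal, so the only non-trivial constraints arise when $\bm{\sigma'}=F_i(\bm{\sigma})$ for some $i$. For such a pair, the hypothesis $J_{ii}=0$ yields $\varphi_i(F_i(\bm{\sigma}))=\varphi_i(\bm{\sigma})$, and the elementary identity $(1-\tanh y)/(1+\tanh y)=e^{-2y}$ collapses the ratio of the two transition rates to
\begin{equation*}
\frac{W[F_i(\bm{\sigma});\bm{\sigma}]}{W[\bm{\sigma};F_i(\bm{\sigma})]} \;=\; \frac{1-\sigma_i \tanh(\beta\varphi_i(\bm{\sigma}))}{1+\sigma_i \tanh(\beta\varphi_i(\bm{\sigma}))} \;=\; e^{-2\beta \sigma_i \varphi_i(\bm{\sigma})}.
\end{equation*}
Detailed balance is thus equivalent to the requirement that $p_\infty(F_i(\bm{\sigma}))/p_\infty(\bm{\sigma})$ equal this exponential for every $i$ and every $\bm{\sigma}$.

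For the $(\Leftarrow)$ direction I would substitute the Boltzmann ansatz $p_\infty(\bm{\sigma})\propto e^{-\beta \mathcal{H}(\bm{\sigma})}$ and compute $\Delta\mathcal{H}=\mathcal{H}(F_i(\bm{\sigma}))-\mathcal{H}(\bm{\sigma})$. Only terms indexed by $i$ in one of the two slots survive: both $J_{ik}\sigma_i\sigma_k$ (for $k\neq i$) and $J_{ki}\sigma_k\sigma_i$ change sign, while the self-term is zero by hypothesis. Under the symmetry assumption $J_{ik}=J_{ki}$ these contributions group into $2\sigma_i\sum_{k\neq i} J_{ik}\sigma_k$, and together with the bias term $2h_i\sigma_i$ they yield $\Delta\mathcal{H}=2\sigma_i\varphi_i(\bm{\sigma})$, which matches the ratio above. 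Stationarity then follows for free by summing the detailed balance relations over $\bm{\sigma'}$ and using that the columns of $W$ sum to one; combined with the uniqueness guaranteed by the ergodicity proposition, this identifies $p_\infty$ with the Boltzmann measure.

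The $(\Rightarrow)$ direction is the more delicate step. Assuming detailed balance, the single-flip identity forces $p_\infty(F_i(\bm{\sigma}))/p_\infty(\bm{\sigma}) = e^{-2\beta \sigma_i \varphi_i(\bm{\sigma})}$ for every $i,\bm{\sigma}$. I would then consider a configuration $\bm{\tau}=F_j(F_i(\bm{\sigma}))$ with $i\neq j$ and compute the cumulative ratio $p_\infty(\bm{\tau})/p_\infty(\bm{\sigma})$ along the two admissible paths $\bm{\sigma}\to F_i(\bm{\sigma})\to \bm{\tau}$ and $\bm{\sigma}\to F_j(\bm{\sigma})\to \bm{\tau}$, invoking $\varphi_j(F_i(\bm{\sigma}))=\varphi_j(\bm{\sigma})-2J_{ji}\sigma_i$ and the analogous identity with the indices swapped. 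Both exponents must coincide, and after cancelling the common terms $\sigma_i\varphi_i(\bm{\sigma})+\sigma_j\varphi_j(\bm{\sigma})$ the equality reduces to $J_{ij}\sigma_i\sigma_j=J_{ji}\sigma_i\sigma_j$, whence $J_{ij}=J_{ji}$. The main obstacle, and the only conceptual step beyond bookkeeping, is precisely this closed-loop consistency: the log-density must be a well-defined potential on the hypercube $\{-1,+1\}^N$, and the resulting two-flip cocycle condition is what pins down the symmetry of the couplings.
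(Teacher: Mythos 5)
Your proposal is correct and follows essentially the same route as the paper's proof: both reduce detailed balance to single-spin-flip pairs $\bm{\sigma'}=F_i(\bm{\sigma})$, use $J_{ii}=0$ to equate the local fields before and after the flip, convert the $\tanh$ ratio into an exponential, verify the Boltzmann ansatz for the backward implication, and extract $J_{ij}=J_{ji}$ from a two-flip consistency condition. The only cosmetic difference is that the paper tracks the deviation of $\log p_{\infty}$ from the symmetrized quadratic form via an explicit ``implicit term'' $K(\bm{\sigma})$ and imposes invariance of its second difference under the swap $(i,j)\mapsto(j,i)$, whereas you phrase the same closed-loop condition directly as path-independence of the product of transition-rate ratios.
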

	Notice that $p_{\infty}(\bm{\sigma})$ corresponds to the Boltzmann-Gibbs distribution for $(\bm{\sigma},J,\bm{h})$ .
	\begin{proof}
		Wlog we can suppose $\bm{\sigma'} \neq \bm{\sigma}$; moreover, we assume $\bm{\sigma'}  = F_{i} (\bm{\sigma})$ where $F_{i}$ is the flipping operator. In this case, the DB-condition is equivalent to 
		\begin{equation*}
			\dfrac{  p_{\infty}(\bm{\sigma'}) e^{- \beta \sigma'_{i} \varphi_{i}(\bm{\sigma'})} }{\operatorname{cosh}(\beta  \varphi_{i}(\bm{\sigma'})) } = \dfrac{  p_{\infty}(\bm{\sigma}) e^{- \beta \sigma_{i} \varphi_{i}(\bm{\sigma})} }{\operatorname{cosh}(\beta  \varphi_{i}(\bm{\sigma})) }
		\end{equation*}
		Notice that $\varphi_{i}(\bm{\sigma}) = \varphi_{i}(\bm{\sigma'})$ because of no self-interaction; hence the DB-condition becomes
		\begin{equation*}
			p_{\infty}(\bm{\sigma'}) e^{- \beta \sigma'_{i} \varphi_{i}(\bm{\sigma})} = p_{\infty}(\bm{\sigma}) e^{- \beta \sigma_{i} \varphi_{i}(\bm{\sigma})}
		\end{equation*}
	Recall that the process is ergodic, so $p_{\infty}(\bm{\sigma}) > 0 \ \ \forall \bm{\sigma}$; therefore, we can express the limit distrubution in terms of the exponential function
	\begin{equation*}
		p_{\infty}(\bm{\sigma}) = \operatorname{exp} \left\{ \beta \left( \sum_{k} h_{k} \sigma_{k} + \frac{1}{2} \sum_{k \neq l} \sigma_{k} J_{kl} \sigma_{l} + K(\bm{\sigma}) \right) \right\}
	\end{equation*}
   	where $K(\bm{\sigma})$ is the \emph{implicit term}. Combining the two previous equations, we obtain that the DB condition equals 
	\begin{equation*}
		 \left\{
    	\begin{aligned}
        		&\operatorname{exp}(g_{i}(\bm{\sigma'})) = \operatorname{exp}(g_{i}(\bm{\sigma}))\\
        		&\dfrac{g_{i}(\bm{\sigma})}{\beta} = -\sigma_{i} \varphi_{i} (\bm{\sigma}) + \sum_{k=1}^{N} h_{k} \sigma_{k} + \frac{1}{2} \sum_{k \neq l} J_{kl} \sigma_{k} \sigma_{l} + K(\bm{\sigma})
   	 \end{aligned}
    \right.
	\end{equation*}
	We observe that, by explicating the local field, we obtain
	\begin{equation*}
	\begin{aligned}
		\dfrac{g_{i}(\bm{\sigma})}{\beta}  &= - \sigma_{i} \left( \sum_{k=1}^{N} J_{ik} \sigma_{k} + h_{i} \right) + \sum_{k=1}^{N} h_{k} \sigma_{k} + \frac{1}{2} \sum_{k \neq l} J_{kl} \sigma_{k} \sigma_{l} + K(\bm{\sigma}) \\
		&= \left( -\sigma_{i} h_{i} +  \sum_{k=1}^{N} h_{k} \sigma_{k} \right) + \left(  - \sum_{k=1}^{N} J_{ik} \sigma_{k} \sigma_{i}  + \frac{1}{2} \sum_{k \neq l} J_{kl} \sigma_{k} \sigma_{l} \right) + K(\bm{\sigma}) \\
		&=  \sum_{k \neq i } h_{k} \sigma_{k} + \left( - \sum_{k \neq i} J_{ik} \sigma_{k} \sigma_{i} + \frac{1}{2} \sum_{\substack{k \neq l , k \neq i , l \neq i}} J_{kl} \sigma_{k} \sigma_{l} + \frac{1}{2} \sum_{k \neq i} J_{ki} \sigma_{k} \sigma_{i}  + \frac{1}{2} \sum_{l \neq i} J_{il} \sigma_{i} \sigma_{l}  \right) + K(\bm{\sigma}) \\
		&= \{ \text{terms non involving index $i$} \} + \frac{1}{2} \sum_{k \neq i} (J_{ki} - J_{ik}) \sigma_{i} \sigma_{k} + K(\bm{\sigma})
	\end{aligned}
	\end{equation*}
	In conclusion, these calculations show that the DB condition holds if and only if there exists a function $K(\cdot)$ such that $K(\bm{\sigma'}) - K(\bm{\sigma}) = \sigma_{i} \sum_{k \neq i} (J_{ik} - J_{ki}) \sigma_{k}$. We can now demonstrate the two implications. Let's now assume that the balance condition applies. Thus, we consider $\bm{\sigma} = F_{j} (\bm{\sigma})$ with $j \neq i$ so we obtain
	\begin{equation*}
	\begin{aligned}
		K( F_{i} F_{j} \bm{\sigma} ) - K (  F_{j} \bm{\sigma} ) &=  \sigma_{i} \sum_{k \neq i} (J_{ik} - J_{ki}) F_{j} \sigma_{k} \\
		&=  \left( \sigma_{i} \sum_{k \neq i} (J_{ik} - J_{ki}) \sigma_{k} \right)  - 2 \sigma_{i} (J_{ij} - J_{ji}) \sigma_{j} 
	\end{aligned}
	\end{equation*}
	whence
	\begin{equation*}
		K( F_{i} F_{j} \bm{\sigma} ) - K (  F_{j} \bm{\sigma} ) - K(F_{i} \bm{\sigma}) + K(  \bm{\sigma}) =  - 2 \sigma_{i} (J_{ij} - J_{ji}) \sigma_{j} 
	\end{equation*}
	and since the left-hand member is invariant with respect to permutation $(i,j)$, the right-hand member must necessarily be invariant, i.e. there must be symmetrical interaction. \\
	On the other hand, if we assume symmetrical interaction, then the DB condition is equivalent to the existence of a function $K$ such that $K(F_{i} \bm{\sigma}) - K(\bm{\sigma}) = 0 $ and this is easily verified by taking, for example, constant $ K(\bm{\sigma}) = K$.
	\end{proof}
    
\section{Statistical Mechanics background}
	Statistical mechanics is a branch of Physics that elucidates the collective behavior of macroscopic systems through the analysis of statistical properties at the microscopic level. \\
	 \emph{Spin models} form a foundational framework in statistical mechanics, offering a conceptual lens to investigate the collective behavior of magnetic systems. At their core, these models represent the angular momentum of atomic spins, influencing the material's magnetic properties. Notable among them is the \emph{spin glass model}, introducing disorder for complex behaviors. A specific variant, the \emph{mean-field spin glass}, simplifies the description for tractable analyses. These models illuminate the dynamics of magnetic materials, serving as invaluable tools to unveil the intricate interplay between microscopic spins and macroscopic magnetic phenomena. Within the domain of neural networks, spin models find a unique application in unraveling the intricate dynamics governing these complex systems. By adapting the principles of statistical mechanics to model the interactions between spins, analogous to neurons in a network, researchers can gain valuable insights into the emergent behaviors, phase transitions, and information processing mechanisms within neural networks. These spin models offer a conceptual bridge, allowing us to draw parallels between the collective behavior of spins in magnetic systems and the dynamic interactions of neurons in a network. In this interdisciplinary approach, spin models prove to be versatile tools, shedding light on the nuanced dynamics that define the computational prowess of neural networks. In this section, we provide a concise overview of key statistical mechanics concepts essential for understanding the Hopfield model. We explore principles directly relevant to this neural network, distilling the foundational elements needed to navigate the interplay between statistical mechanics and the Hopfield model's dynamics. Our aim is to offer a focused and accessible entry into the world of statistical mechanics tailored to the study of neural networks.
	 In general, we have a mean-field spin model with $\bm{\sigma} \in \{-1,+1\}^{N}$, $J \in \mathbb{R}^{N \times N}$ symmetric and the \textbf{Hamiltonian}
	 \begin{equation}
		 \mathcal{H}(\bm{\sigma};N,J,\bm{h}) = - \frac{1}{2} \sum_{i,j=1}^{N} \sigma_{i} J_{ij} \sigma_{j}  - \sum_{i=1}^{N} h_{i} \sigma_{i} 
	\end{equation}
	although our focus will be on the probability distribution
	\begin{equation}
		\rho ( \bm{\sigma} ; \beta,N,J,\bm{h}) = \dfrac{\operatorname{exp} \left( - \beta  \mathcal{H}(\bm{\sigma};N,J,\bm{h}) \right)}{Z_{\beta,N,J,\bm{\sigma}}}
	\end{equation}
	where 
	\begin{equation}
		Z_{\beta,N,J,\bm{h}} = \sum_{\bm{\sigma'}} \exp \left( - \beta  \mathcal{H}(\bm{\sigma'};N,J,\bm{h}) \right)
	\end{equation}
	is the \textbf{Partition function}. We observe that we are exactly in the context described by Theorem \ref{thm3} and this shows the close connection between associative neural network models and those of statistical mechanics. The level of description of statistical mechanics is the \textbf{Mesoscopic level}. For each \emph{mesoscopic state} $i \in \mathcal{T}$, we consider its \emph{energy} $E_{i}$; then, a \emph{thermodynamic state} of the system is described by statistical set $ \{ \rho_{i} \}_{i \in \mathcal{T}} $ interpreted as a probability distribution over the set of states $\mathcal{T}$. Let us now give some definitions. 
	 \begin{definition}
	 	For all $j = 1...|\mathcal{T}|$ we call a pure state $\rho^{(j)}$ if $\rho_{i}^{(j)} = \delta_{ij} \ \ \forall i \in \mathcal{T}$. 
	 \end{definition}
    Notice that the set $\mathcal{S}$ of thermodynamic state is a simplex, so each state $\rho \in \mathcal{S}$ can be expressed as a combination of pure states.
    \begin{definition}
     We define the following functions  where $k_{B}$ is the Boltzmann constant :
     \begin{enumerate}[label=(\roman*)]
 		 \item Internal Energy $U(E,\rho) = \sum_{i \in \mathcal{T}} \ \rho_{i} E_{i}$
		  \item Gibbs Entropy $S(\rho) = - k_{B}   \sum_{i \in \mathcal{T}}  \ \rho_{i} \operatorname{log} (\rho_{i})$
  		\item \textbf{Free Energy} $F(E,\rho,T) = U(E,\rho) - T  S(\rho) $
   \end{enumerate}
    \end{definition}
    \begin{definition}
    	We define a thermodynamic equilibrium,or \textbf{Boltzmann-Gibbs distribution}, the state $ \bar{\rho}$ that minimizes the free energy $F$.
	\end{definition}
	\begin{theorem}
    It holds that
    \begin{equation*}
        \left\{
        \begin{aligned}
            &\bar{\rho}_{i} = \dfrac{e^{-\frac{E_{i}}{k_{B}T}}}{Z} \\
            &Z = \sum_{i \in \mathcal{T}} e^{-\frac{E_{i}}{k_{B}T}} = e^{-\frac{\bar{F}(E,T)}{k_{B}T}} \\
            &\bar{F}(E,T) = \operatorname{inf}_{\rho \in \mathcal{S}} F(E,\rho,T) = F(E,\bar{\rho},T) \ .
        \end{aligned}
        \right.
    \end{equation*}
\end{theorem}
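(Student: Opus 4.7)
The plan is to solve the constrained optimization problem defining $\bar{\rho}$ by Lagrange multipliers, and then substitute the optimal density back into $F$ to identify $\bar{F}$ with $-k_B T \log Z$. To set up, I would write the objective explicitly as
\begin{equation*}
F(E,\rho,T) = \sum_{i \in \mathcal{T}} \rho_i E_i + k_B T \sum_{i \in \mathcal{T}} \rho_i \log \rho_i,
\end{equation*}
with admissible region the simplex $\mathcal{S} = \{\rho : \rho_i \geq 0,\ \sum_i \rho_i = 1\}$, using the convention $0\log 0 = 0$ on the boundary.

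First I would record that $F(E,\cdot,T)$ is strictly convex on $\mathcal{S}$ for $T>0$: the internal energy contributes a linear term, while $x \mapsto x\log x$ is strictly convex on $(0,\infty)$, so $-S$ and hence $F$ are strictly convex. This immediately gives existence and uniqueness of the minimizer and reduces the problem to finding a single stationary point of the Lagrangian.

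Next I would form $\mathcal{L}(\rho,\lambda) = F(E,\rho,T) - \lambda\bigl(\sum_i \rho_i - 1\bigr)$ and impose $\partial \mathcal{L}/\partial \rho_i = 0$, obtaining $E_i + k_B T(\log \rho_i + 1) - \lambda = 0$. Solving gives $\rho_i = \exp\!\bigl((\lambda - k_B T)/(k_B T)\bigr)\,\exp(-E_i/(k_B T))$; the normalization $\sum_i \rho_i = 1$ then fixes the prefactor and yields $\bar{\rho}_i = e^{-E_i/(k_B T)}/Z$ with $Z = \sum_i e^{-E_i/(k_B T)}$. Since $\bar{\rho}_i > 0$, the critical point lies in the interior of $\mathcal{S}$, and strict convexity rules out boundary minima, so $\bar{\rho}$ is the global minimizer.

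Finally I would substitute $\log \bar{\rho}_i = -E_i/(k_B T) - \log Z$ back into $F$. The entropic term then exactly cancels the internal-energy term, leaving $\bar{F}(E,T) = F(E,\bar{\rho},T) = -k_B T \log Z$, equivalent to $Z = e^{-\bar{F}(E,T)/(k_B T)}$, as claimed. The only delicate point I anticipate is treating possible boundary configurations where $\rho_i = 0$, on which $\log \rho_i$ is singular; this is handled cleanly by the $0\log 0 = 0$ convention together with the strict-convexity argument above, which guarantees the minimum is attained in the open interior and not approached along the boundary.
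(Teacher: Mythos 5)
The paper states this theorem without providing any proof, so there is nothing to compare against; on its own terms your argument is correct and is the standard variational derivation of the Boltzmann--Gibbs distribution. Writing $F(E,\rho,T)=\sum_i\rho_iE_i+k_BT\sum_i\rho_i\log\rho_i$, imposing stationarity of the Lagrangian, normalizing to get $\bar{\rho}_i=e^{-E_i/(k_BT)}/Z$, and back-substituting to obtain $\bar{F}=-k_BT\log Z$ are all carried out correctly, and the strict convexity of $x\mapsto x\log x$ does give uniqueness. One small point of precision: strict convexity by itself does not exclude a boundary minimum (a strictly convex function on a compact convex set can perfectly well attain its minimum on the boundary); what actually settles the matter is that you have exhibited an \emph{interior} stationary point of a convex function on a convex set, which is therefore the global minimizer, with uniqueness then following from strict convexity. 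Alternatively one can note that the inward directional derivative of $\rho_i\log\rho_i$ at $\rho_i=0$ is $-\infty$, so no boundary point can be optimal. Either remark closes the only loose end; the rest of the proof is complete.
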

Before moving on to the study of the simplest neural network model, i.e. the Curie Weiss model, let us define other functions that will be studied next.
 \begin{definition}
     We define the following functions :
     \begin{enumerate}[label=(\roman*)]
 		 \item \textbf{Intensive Free Energy} $f_{N,\beta,J,\bm{h}} = - \frac{T}{N} log Z_{\beta,N,J,\bm{h}}$
		  \item Intensive Pressure $A_{N,\beta,J,\bm{h}} = \frac {1}{N} log Z_{\beta,N,J,\bm{h}} $
  		\item Thermal average of observable $g$ \ \ $\omega_{N,\beta,J,\bm{h}}(g) = \sum_{\bm{\sigma}} g(\bm{\sigma}) \rho_{N,\beta,J,\bm{h}}(\bm{\sigma})$
   \end{enumerate}
    \end{definition}
    Furthermore, we will say that one of the defined functions is in the \textbf{Thermodynamic limit} (\textbf{TDL}) if we let N tend to infinity. This concept will be fundamental in the continuation of the article, as we will engage in finding solutions in this context because it would be more simple. However, the existence of this limit will not be analysed, which is a very complicated analytical problem.

\subsection{Curie-Weiss Model}
Let us now turn our attention to a very simple neural network model, the Curie-Weiss model. This model can be described as a system made of $N$ spins $\sigma_{i} \in \{-1,+1\}$  that can interact pairwise and with an external field according to the Hamiltonian
\begin{equation*}
	 \mathcal{H}_{N,J,\bm{h}}(\bm{\sigma}) = -  \sum_{(i,j)} \sigma_{i} J_{ij} \sigma_{j}  - \sum_{i=1}^{N} h_{i} \sigma_{i} 
\end{equation*}
We shall consider a homogenous coupling, i.e. $J_{ij} = \frac{J}{N}$ with $J$ constant, and homogeneous external field, i.e.  $h_{i} = h \ \forall i=1...N$. The \emph{order parameter} of the model is the \textbf{empirical Magnetization} 
\begin{equation}
	m_{N}(\bm{\sigma}) = \dfrac{1}{N} \sum_{i=1}^{N} \sigma_{i}
\end{equation}
which expresses the percentage of spin pointing upwards or downwards. In fact, if $m=1$ then there is all positive spin while if $m=-1$ all negative spin. In fact, we can rewrite the Hamiltonian as a function of $m$ as follows
\begin{equation}
	 \mathcal{H}_{N,J,\bm{h}}(\bm{\sigma}) = - \dfrac {J}{N} \sum_{i>j} \sum_{j} \sigma_{i} \sigma_{j} - h \sum_{i} \sigma_{i} = - \dfrac {NJ}{2} (m_{N}(\bm{\sigma}))^{2} - hN m_{N}(\bm{\sigma}) + \dfrac{J}{2}
\end{equation}
where the last term is the diagonal element that we added to introduce $m$ and therefore we have to subtract it; clearly this term will be left out as it does not affect the minimisation process. This is the reason why we refer to these models as \emph{mean field models}. This allows us to apply the so-called \textbf{Coarse-Graining} process, which consists of simplifying the expression for the partition function and rewriting it as 
\begin{equation*}
\begin{aligned}
	Z_{N} &= \sum_{\bm{\sigma}} e^{-\beta \mathcal{H}_{N}(\bm{\sigma})} = \sum_{m \in \mathcal{M}}  e^{-\beta \mathcal{H}_{N}(m) \Omega_{N}(m)} \\
	&\approx \sum_{m \in \mathcal{M}}  e^{-\beta F_{N}(m)} \approx \sum_{k} e^{-\beta N f_{N}(m_{k}^{*})} \rho_{N}^{(k)}
\end{aligned}
\end{equation*}
where $ \Omega_{N}(m)$ is the number of configurations with magnetization equal to $m$ and $m^{*}$ is the argmin of the function $[U_{N}(m) - T S_{N}(m)]$. The approximations have been made assuming that $\Omega_{N}$ is equal to Gibbs entropy (and not Boltzmann entropy) and that N is very large. Now we want to derive an explicit formula for free energy via \textbf{Laplace's method}.
\begin{theorem}
Let $g \in \mathcal{C}^{2}([a,b])$ and $x_{0} \in (a,b)$ be the only point such that $g(x_{0}) = \max_{x \in [a,b]} g(x)$ and $g''(x_{0}) < 0$. Then
\begin{equation}
\lim_{N \rightarrow \infty} \frac{\int_{a}^{b} e^{N g(x)} \, dx}{e^{N g(x_{0})} \sqrt{\frac{2\pi}{N(-g''(x_{0}))}} }= 1.
\end{equation}
This holds also for $a=-\infty$ and/or $b=+\infty$ and for $x \in \mathbb{R}^{K}$ with $\lim_{N \rightarrow \infty} \dfrac{K}{N}=0$.
\end{theorem}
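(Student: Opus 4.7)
The plan is to reduce the asymptotics to a standard Gaussian integral by localizing around the unique maximizer $x_{0}$ and Taylor-expanding $g$ to second order. I first factor $e^{Ng(x_{0})}$ out of the numerator so that the ratio in the statement equals
\begin{equation*}
R_{N} = \sqrt{\frac{N(-g''(x_{0}))}{2\pi}} \int_{a}^{b} e^{N(g(x)-g(x_{0}))}\,dx .
\end{equation*}
Next I pick an auxiliary $\varepsilon>0$, to be tuned later, and split $[a,b] = I_{\varepsilon} \cup J_{\varepsilon}$ where $I_{\varepsilon} = [x_{0}-\varepsilon, x_{0}+\varepsilon]$ and $J_{\varepsilon}$ is its complement.

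The contribution from $J_{\varepsilon}$ is negligible: since $x_{0}$ is the unique maximum of the continuous function $g$ on the compact interval $[a,b]$, there is $\delta = \delta(\varepsilon) > 0$ with $g(x) - g(x_{0}) \leq -\delta$ for all $x \in J_{\varepsilon}$. Hence this piece of $R_{N}$ is bounded by $\sqrt{N(-g''(x_{0}))/(2\pi)}\,(b-a)\,e^{-N\delta}$, which tends to $0$ because exponential decay beats polynomial growth. For $[a,b]=\mathbb{R}$ the compactness step is replaced by a mild integrability hypothesis ensuring $e^{g}$ is integrable, and the $\mathbb{R}^{K}$ extension uses exactly the same splitting in a ball of radius $\varepsilon$ around $x_{0}$, which is why the condition $K/N \to 0$ is needed to kill the polynomial growth in $K$ coming from the Gaussian prefactor.

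On the local interval $I_{\varepsilon}$, Taylor's theorem gives $g(x) - g(x_{0}) = \tfrac{1}{2} g''(\xi_{x}) (x - x_{0})^{2}$ for some $\xi_{x}$ between $x$ and $x_{0}$, the linear term vanishing because $x_{0}$ is an interior maximum. Continuity of $g''$ together with $g''(x_{0}) < 0$ allows me, for any $\eta \in (0,1)$, to pick $\varepsilon$ small enough that $(1-\eta)(-g''(x_{0})) \leq -g''(\xi_{x}) \leq (1+\eta)(-g''(x_{0}))$ throughout $I_{\varepsilon}$. Sandwiching the integrand between the two Gaussians and substituting $u = \sqrt{N(1 \pm \eta)(-g''(x_{0}))}\,(x-x_{0})$ reduces each bound to a Gaussian integral over a symmetric range whose endpoints $\varepsilon\sqrt{N(1\pm\eta)(-g''(x_{0}))}$ diverge to $+\infty$; monotone convergence then gives $\int e^{-u^{2}/2}\,du = \sqrt{2\pi}$. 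Combined with the prefactor, this sandwiches $R_{N}$ asymptotically between $(1+\eta)^{-1/2}$ and $(1-\eta)^{-1/2}$; sending $\eta \to 0$ afterwards yields $R_{N} \to 1$.

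The main obstacle is orchestrating the three limits $\eta \to 0$, $\varepsilon \to 0$ and $N \to \infty$ in the correct order: $\eta$ is fixed first, then $\varepsilon = \varepsilon(\eta)$ is chosen so that the Taylor sandwich on $g''$ holds on $I_{\varepsilon}$, only then $N \to \infty$ is taken (which simultaneously sends the tail to zero and extends the Gaussian range to all of $\mathbb{R}$), and finally $\eta \to 0$. Reversing this ordering would leave an uncontrolled error either in the Gaussian approximation of the integrand or in the truncation of the integration domain, so the bookkeeping of these dependencies is the most delicate part of the argument.
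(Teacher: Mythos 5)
The paper itself gives no proof of this theorem: it is quoted as the classical Laplace approximation and immediately put to work (through Corollary \ref{cor1}) in the coarse-graining of the Curie--Weiss partition function, so there is no argument of the author's to compare yours against. Your proof of the one-dimensional compact case is correct and is the standard one: factor out $e^{Ng(x_0)}$, kill the complement of $[x_0-\varepsilon,x_0+\varepsilon]$ using the uniform gap $g(x)-g(x_0)\le-\delta$ guaranteed by uniqueness of the maximizer on a compact interval, sandwich the local integrand between two Gaussians via the Lagrange form of the Taylor remainder (the linear term vanishing since $x_0$ is an interior maximum), and take the limits in the order fix $\eta$, choose $\varepsilon(\eta)$, send $N\to\infty$, then $\eta\to 0$. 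Your closing discussion of why that ordering cannot be permuted is exactly the delicate point and you handle it correctly.

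Two caveats concern the extensions you dispatch in one sentence each. For $a=-\infty$ or $b=+\infty$ you rightly flag that an integrability hypothesis is needed (e.g.\ $\int e^{g}<\infty$ together with $\sup_{|x-x_0|\ge\varepsilon}g(x)<g(x_0)$); the theorem as stated in the paper silently assumes this, so the gap is in the statement rather than in your argument. For $x\in\mathbb{R}^{K}$ with $K$ growing, however, your sandwich produces bounds of the form $(1\pm\eta)^{-K/2}$, which do not tend to $1$ for fixed $\eta$ as $K\to\infty$; one needs $\eta=\eta(K)\to 0$ sufficiently fast, i.e.\ quantitative (third-order) control of the Taylor remainder uniformly over the $\varepsilon$-ball, or one must retreat to the weaker statement at the level of $\frac{1}{N}\log$ of both sides --- which is all the paper actually uses in Corollary \ref{cor1}. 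So the $K$-dependent part of the claim is not fully established by your sketch, though everything the paper relies on downstream is.
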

\begin{corollary}
\label{cor1}
	Let be $g$ a convex function. Then we have
	\begin{equation}
	\label{eq24}
		\lim_{N \rightarrow \infty} -\dfrac{1}{N} \log \int_{-\infty}^{+\infty} dx \ e^{Ng(x)} = \min_{x} g(x).
	\end{equation}
\end{corollary}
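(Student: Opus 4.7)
The plan is to reduce Corollary \ref{cor1} directly to the Laplace theorem stated just above, reading it as the convex/minimum dual of that theorem's concave/maximum formulation; the integrand here is understood as $e^{-Ng(x)}$, which is what makes the integral convergent and produces $\min g$ rather than $\max g$. Accordingly, I would set $\tilde g := -g$: strict convexity of $g$ with $g \in C^{2}$ makes $\tilde g$ strictly concave with a unique maximum at $x_{0} := \arg\min_{x} g(x)$ and $\tilde g''(x_{0}) = -g''(x_{0}) < 0$, exactly the hypothesis of the Laplace theorem.

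Then I would invoke Laplace directly to obtain
\begin{equation*}
\int_{-\infty}^{+\infty} e^{-N g(x)}\, dx \ \underset{N\to\infty}{\sim}\ e^{-N g(x_{0})}\sqrt{\frac{2\pi}{N\, g''(x_{0})}},
\end{equation*}
and take $-\tfrac{1}{N}\log$ of both sides. The exponential factor contributes exactly $g(x_{0}) = \min_{x} g(x)$, while the Gaussian prefactor contributes $-\tfrac{1}{2N}\log\!\bigl(2\pi/(N g''(x_{0}))\bigr) = O(\log N / N)$, which vanishes as $N \to \infty$. This delivers equation \ref{eq24}.

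The main obstacle, and really the only delicate point, is matching hypotheses: the corollary presumes only convexity, whereas Laplace requires $C^{2}$ smoothness and a non-degenerate minimum. I would handle the general case in one of two ways. Either (i) tacitly upgrade the hypothesis to strictly convex and $C^{2}$, which is standard and costs nothing in the statistical-mechanics applications this corollary is meant to serve (the Curie-Weiss and Hopfield intensive free energies in the following sections), or (ii) approximate $g$ uniformly on compact sets by strictly convex, smooth functions $g_{\varepsilon}$, apply the result to each $g_{\varepsilon}$, and pass to the limit using monotonicity of the map $g \mapsto -\tfrac{1}{N}\log \int e^{-N g(x)}\, dx$ together with continuity of the minimum under uniform convergence. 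Integrability on all of $\mathbb{R}$ is automatic once $g$ has at least linear growth at infinity, which any proper convex function provides through its affine minorants; this justifies applying the $a = -\infty$, $b = +\infty$ version of the theorem without further comment.
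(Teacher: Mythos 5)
Your derivation is correct and is exactly the intended route: the paper states this corollary without proof as an immediate consequence of the preceding Laplace theorem, and your reading of the integrand as $e^{-Ng(x)}$ (so that the integral converges and the limit is $\min_x g(x)$ rather than $-\max_x g(x)$) matches how the result is actually used later, e.g.\ for $f_N = -\frac{1}{N\beta}\log Z_N \to f(m^*)$. The one small overstatement is the claim that integrability over all of $\mathbb{R}$ is automatic for any proper convex $g$ — convexity alone does not force coercivity (consider $g(x)=e^{-x}$, or an affine $g$), so you must additionally assume $g(x)\to\infty$ as $|x|\to\infty$, after which the affine-minorant argument does yield the linear growth you invoke; this is harmless in the Curie--Weiss and Hopfield applications, where the relevant $g$ is manifestly coercive.
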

We observe that we can use the theorem to rewrite the distribution function as
\begin{equation*}
\begin{aligned}
	Z_{N} &= \int dm \ Z_{N}(m) = \int dm \ e^{-N\beta f_{N}(m)} \\
		&\overset{N >> 1}{\approx} \int dm \ e^{-N \beta f_{N}(m) e^{-\frac{N \beta}{2} f''_{N}(m^{*})(m-m^{*})^{2}}} \\
		&= e^{-N \beta f_{N}(m^{*})} \sqrt{\dfrac{2 \pi}{N \beta f''_{N}(m^{*})}}
\end{aligned}
\end{equation*}
from which we obtain that
\begin{equation*}
	f_{N} = - \dfrac{1}{N \beta} log Z_{N} \overset{N >> 1}{\approx} f_{N}(m^{*}) + \dfrac{log\left( \frac{2 \pi}{\beta f''(m^{*})} \right) - log N}{2 N \beta} \overset{N \to \infty}{\longrightarrow} f(m^{*}).
\end{equation*}
Thus we have obtained that, under TDL, the intensive free energy is simply $f(m)$ calculated at its minimum $m^{*}$, which in turn provides the expectation of the magnetization; so we want an explicit formula for $f$. Notice that
\begin{equation}
	f(m) = - \lim_{N \to \infty} \dfrac{T}{N} log Z_{N}(m) = \frac{1}{2} J m^{2} - h m - T s(m)
\end{equation}
where $s(m) =  \lim_{N \to \infty} \frac{1}{N} log \Omega_{N}(m)$ is the entropy per spin. We will now calculate $s(m)$ using the integral representation of Dirac's delta. Indeed, we have
\begin{equation*}
\begin{aligned}
	\delta(m - m_{N}(\bm{\sigma})) &= \int_{-\infty}^{+\infty} dx \ \frac{N}{2\pi} e^{iNx[m-m_{N}(\bm{\sigma})]} =  \int_{-\infty}^{+\infty} dx \ \frac{N}{2\pi} e^{iNxm-ix\sum_{j}\sigma_{j}} \\
	&=  \int_{-\infty}^{+\infty} dx \ \frac{N}{2\pi} e^{iNxm} \prod_{j=1}^{N} e^{-i\sigma_{i}x}
\end{aligned}
\end{equation*}
thus
\begin{equation*}
\begin{aligned}
	\Omega_{N}(m) &= \sum_{\bm{\sigma}} \delta(m - m_{N}(\bm{\sigma})) = \sum_{\bm{\sigma}}   \int_{-\infty}^{+\infty} dx \ \frac{N}{2\pi} e^{iNxm} \prod_{j=1}^{N} e^{-i\sigma_{i}x} \\
	 &=  \int_{-\infty}^{+\infty} dx \ \frac{N}{2\pi} e^{iNxm}	\prod_{j=1}^{N} \sum_{\sigma_{j}=\pm1} e^{-i\sigma_{i}x} \\
	 &=  \int_{-\infty}^{+\infty} dx \ \frac{N}{2\pi} e^{iNxm} \left( 2 \cos(x) \right)^{N} = \int_{-\infty}^{+\infty} dx \ \frac{N}{2\pi} e^{iNxm} e^{N \log (2cos(x))} . 
\end{aligned}
\end{equation*}
From Corollary \ref{cor1}, we get
\begin{equation*}
	\dfrac{d}{dx} [imx - \log(2\cos(x))] = im - \tan(x) = 0 \iff x^{*} = \atan(im) = i \operatorname{atanh}(m) =  \frac{i}{2} \log \left( \dfrac{1+m}{1-m} \right) 
\end{equation*}
from which
\begin{equation*}
	s(m) = -m \tanh(m) + \log(2 \cosh(\tanh(m))) = -\dfrac{1+m}{2} \log( \frac{1+m}{2}) - \dfrac{1-m}{2} \log( \frac{1-m}{2})
\end{equation*}
where where we put $m = x^{*}$. In conclusion, we obtain the explicit formula for the \textbf{coarse-grained intensive free energy in TDL} given by
\begin{equation}
	f(m) = -\dfrac{J}{2} m^{2} - hm + T \left[ \dfrac{1+m}{2} \log( \frac{1+m}{2}) + \dfrac{1-m}{2} \log( \frac{1-m}{2} ) \right] 
\end{equation}
 \begin{figure}[htpb]
    \centering
    \includegraphics[width=0.8\textwidth]{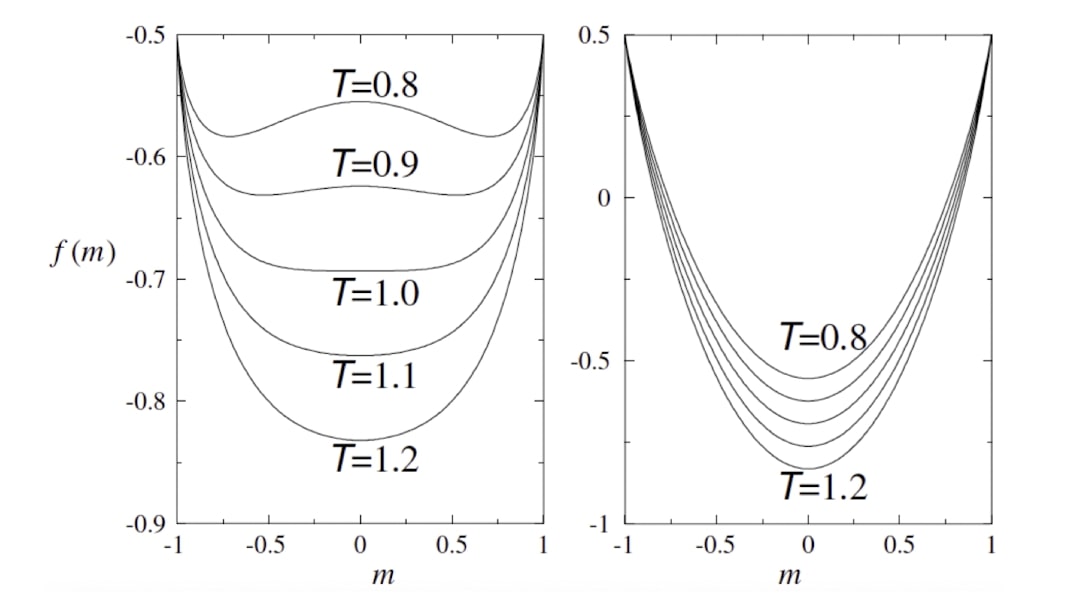}
    \caption{Coarse-grained free energies for the case $J=1$ (left) and $J=-1$ (right)}
    \label{fig5}
  \end{figure}
  
where the first two addends correspond to the energy contribution, while the third addend is the entropy contribution governed by the control parameter $T$. Figure \ref{fig5} shows the plot of $f(m)$ at different temperature levels.  We observe that if $J$ is less than zero, i.e. we are in the presence of \emph{paramagnetic} material, the coarse-grained free energy has a single minimum at $m=0$. Whereas if J is greater than zero, i.e. there is \emph{ferromagnetic} behaviour, then the minimum points vary as the temperature varies. This is a direct consequence of the fact that 
\begin{equation}
	f'(m) = 0 \implies J m^{*} \beta + \beta h = \operatorname{atanh}(m^{*}) = 0 \implies m^{*} = \tanh ( \beta(Jm^{*}+h))
\end{equation}
so the minima correspond to the solutions of the \textbf{Self-Consistency equation} which clearly vary as $\beta$ varies. See Appendix A for further details.

\subsection{Phase Transitions and Ergodicity Breaking}
A \emph{phase transition} occurs when there is a singularity in the free energy or in one of its derivative. This aspect is related to a sharp change in the properties of a substance: for example liquid/gas or paramagnetic/ferromagnetic. It is possible to classify these transitions. For istance, if there is a finite discontinuity in one or more of the prime derivatives, we will say that we are in the presence of a first-order phase transition; if, on the other hand, the prime derivatives are continuous but the second derivatives are discontinuous or finite, then the transition will be of the second order. Phase transitions often involves a \textbf{Symmetry Breaking process}; in fact, the Hamiltonian's system and the equations of the dynamics are invariant under the action of a symmetry group, but the system is not. Tipically, the high-temperature phase contains more symmetries than the low-temperature phase. We denote by $T_{c}$ a critical temperature associated with a phase transition which is sensitively to the interatomic interactions. When $T > T_{c}$ , then the free energy has a single minimum at $m^{*}=0$ and thus the system explores the entire admissible phase space. If $T < T_{c}$, then the free energy has two symmetric minima $\pm m^{*} \neq 0$ and the order of the system is restricted to an appropriate part of the phase space $\Omega^{+}$ or $\Omega^{-}$ with $\Omega = \Omega^{+} \cup  \Omega^{-}$. In the first case the system is ergodic, while in the second case ergodicity is broken. If T=0, the system just moves toward configuration leading to energy minimization so also in this case ergodicity is broken. But in section \ref{2.4.2} we said that the neural dynamics is ergodic; so what is wrong? The fact is that in the TDL $N \to \infty$ the energy barriers cannot be overcome and then ergodicity is broken. It is possible to see this from the point of view of Markov chains; in particular, we denote with $\rho_{j}(t)$ as the probability that the system is in state $j$ at time $t$ and $\bm{\rho} = (\rho_{1}...\rho_{M})$ where $M=2^{N}$ as all the possible states. Then we can express the probability of being at time $t+1$ in state $i$ as \begin{equation*}
	\rho_{i}(t+1) = \sum_{j=1}^{M} W(i \mid j) \rho_{j}(t).
\end{equation*}
with $\bm{W}$ transition matrix. Let us consider its the spectral expansion
\begin{equation*}
	W(i \mid j) = \sum_{k=1}^{M} \lambda_{k} v_{k}^{L}(j) v_{k}^{R}(i)
\end{equation*}
from which we obtain 
\begin{equation*}
	\rho_{i}(t) = \bm{W}^{t} \rho_{i}(0) = \sum_{j=1}^{M} \sum_{k=1}^{M}  \lambda_{k}^{t} v_{k}^{L}(j) v_{k}^{R}(i) \rho_{j}(0).
\end{equation*}
Recall that left and right eigenvectors are orthogonal if they belong to different eigenvalues.
\begin{theorem}
	Let $\bm{W}$ be a stochastic matrix. Then
\begin{enumerate}[label=(\roman*)]
  \item $|\lambda| \leq 1$ for all $\lambda \in \sigma(\bm{W})$
  \item $\lambda_{\text{max}} = 1$ and $\bm{v}^{L} = (1, \ldots, 1)$
\end{enumerate}
\end{theorem}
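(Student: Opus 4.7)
The plan is to exploit the defining property of $\bm{W}$ as a stochastic matrix, namely that its columns sum to one (since $W(i\mid j)$ is the conditional probability of being in state $i$ given previous state $j$, so $\sum_{i} W(i\mid j) = 1$ for every $j$), and derive both claims from elementary linear-algebraic bounds — no appeal to Perron–Frobenius is needed.

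For (i), I would work with left eigenvectors, since the column-sum constraint is cleanest when acting from the left. Fix an eigenvalue $\lambda$ of $\bm{W}$ and a left eigenvector $\bm{u} \neq \bm{0}$ with $\bm{u}\bm{W} = \lambda \bm{u}$ (left and right spectra coincide, so this is no loss of generality). Pick an index $i_{0}$ that attains $|u_{i_{0}}| = \max_{k} |u_{k}| > 0$, and estimate via the triangle inequality followed by column-stochasticity:
\begin{equation*}
|\lambda|\,|u_{i_{0}}| \;=\; \Bigl|\sum_{j} u_{j}\, W_{j\, i_{0}}\Bigr| \;\leq\; \sum_{j} |u_{j}|\, W_{j\, i_{0}} \;\leq\; |u_{i_{0}}| \sum_{j} W_{j\, i_{0}} \;=\; |u_{i_{0}}|.
\end{equation*}
Dividing by $|u_{i_{0}}| > 0$ yields $|\lambda| \leq 1$, which is the claim.

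For (ii), I would simply verify by direct computation that $\bm{v}^{L} = (1,\ldots,1)$ is a left eigenvector with eigenvalue $1$: the $i_{0}$-th component of $\bm{v}^{L}\bm{W}$ equals $\sum_{j} W_{j\, i_{0}}$, which is $1$ by column-stochasticity. Thus $1 \in \sigma(\bm{W})$, and combining with (i) identifies $\lambda_{\max} = 1$ together with the announced eigenvector.

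The only real obstacle is bookkeeping rather than mathematics: one must match the convention $\rho_{i}(t+1) = \sum_{j} W(i\mid j)\rho_{j}(t)$ introduced just before the statement, so that it is the columns (not the rows) that sum to one, and consequently $(1,\ldots,1)$ appears as a \emph{left} (rather than right) eigenvector. Once that convention is pinned down, both parts collapse to the max-entry triangle-inequality trick above.
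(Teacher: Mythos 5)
Your proof is correct. The paper states this theorem without providing any proof, so there is nothing to compare against on that side; your argument — the standard maximum-modulus/triangle-inequality estimate at a component of largest absolute value, combined with column-stochasticity, plus the direct verification that $(1,\ldots,1)$ is a left eigenvector for the eigenvalue $1$ — is the classical elementary route and fills the gap cleanly. You are also right to flag the indexing convention: in the section where the theorem appears the paper writes $\rho_i(t+1)=\sum_j W(i\mid j)\rho_j(t)$, so it is the columns that sum to one and the all-ones vector is a \emph{left} eigenvector, which is exactly what the statement asserts (and is the opposite of the row-stochastic convention used in the earlier Markov-chain subsection). One caveat worth keeping in mind: the theorem as stated (and your proof) establishes only that the spectral radius equals $1$ and is attained; the subsequent convergence claim in the paper, $\rho(t)\to v^{R}_{\lambda_{\max}}$, additionally requires the eigenvalue $1$ to be simple and dominant, which needs ergodicity/Perron--Frobenius and is not part of either argument.
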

This implies that 
\begin{equation*}
	\rho_{i}(t) \overset{t \to \infty}{\longrightarrow} \sum_{j=1}^{M} v_{k^{*}}^{R}(i) \rho_{j}(0) = v_{\lambda_{\text{max}}}^{R}(i)
\end{equation*}
then , after a long enough time, memory of the initial state is lost and , for any initial configuration, the asymptotic distribution is reached corresponding to the right eigenvector of the unique largest eigenvalue. Thus, as long as $N$ is finite and $T>0$, the system is ergodic. \\

So how is it possible to break the ergodicity? As $M$ increases, the spectral gap between the maximum and minimum eigenvalue may decrease; then in the limit $M \to \infty$, i.e. in the TDL,  there may be an asymptotic degeneration and ergodicity is broken. For istance, let us denote $v_{1,2}^{R}, v_{1,2}^{L}$ the right and left eigenvector associated to the max eigenvalue with degeneration $2$. Then we obtain
\begin{equation*}
	\rho_{i}(t) \overset{t \to \infty}{\longrightarrow}  \sum_{k=1}^{2} \sum_{j=1}^{M} v_{k}^{L}(j) v_{k}^{R}(i) \rho_{j}(0) = \sum_{k=1}^{2} [v_{k}^{L} \bm{\rho}(0)] v_{k}^{R}
\end{equation*}
and since $\bm{\rho}(0)$ still appears in the equation, the process will lead to different asymptotic trajectories depending on the initial condition. \\

The Curie Weiss model, although we have comprehensive knowledge, is a very simple neural network model. In fact, we have seen how the cost function ( intensive free energy or the Hamiltonian) has, at best, three points of minimum: $m=0$ in correspondence of a completely disordered pattern, $m= \pm m^{*}$ in correspondence of a $\pm \bm{\xi}$ pattern. Thus the model can store with $N$ neurons only one pattern, which is why we will now analyse in detail the much better performing \emph{Hopfield model}.

\section{Hopfield Model}
Let us now analyse a more complex model. We want to obtain a neural network consisting of $N$ neurons, which is characterised by a cost function representing $P$ local minima at the patterns that we want to store in the network. Indeed, let us consider the patterns 
\begin{equation*}
	\bm{\xi}^{\mu} = (\xi_{1}^{\mu},...,\xi_{N}^{\mu}) \in \{-1,+1\}^{N} \ \ \forall \mu = 1...P.
\end{equation*}
Let us consider the standard Hamiltonian of equation \ref{eq11}, where now Hebb's rule is given by
\begin{equation}
	J_{ij} = \sum_{\mu=1}^{P} \dfrac{\xi_{i}^{\mu} \xi_{j}^{\mu}}{N} \ \ \forall i,j=1...N.
\end{equation}
As we did for the Curie-Weiss model, we rewrite the Hamiltonian as a function of magnetization. In this case, however, we do not use the average magnetization but the so-called \textbf{Mattis Magnetization} expressed by
\begin{equation}
	m_{N,\mu} (\bm{\sigma};\bm{\xi}) = \frac{1}{N} \sum_{i=1}^{N} \xi_{i}^{\mu} \sigma_{i} \ \  \forall \mu = 1...P
\end{equation}
i.e. the vector $(m_{N,1},...,m_{N,P})$. We observe that $m$ represents the percentage of equal spins between the $\bm{\sigma}$ configuration and the $\bm{\xi}^{\mu}$ pattern; indeed, if $m_{N,\mu} (\bm{\sigma}) = 1$ this means that the configuration is exactly identical to the stored pattern. If we assume that we have no external field, then we get
\begin{equation}
	\mathcal{H}_{N,P,\bm{\xi}} (\bm{\sigma}) = -\frac {1}{2} \sum_{(i,j)} \sum_{\mu=1}^{P}  \dfrac{\xi_{i}^{\mu} \xi_{j}^{\mu}}{N} \sigma_{i} \sigma_{j} 
	=  -\frac {N}{2}   \sum_{\mu=1}^{P} (m_{N,\mu} (\bm{\sigma}))^{2} + \dfrac{P}{2} 
\end{equation}
where the second addend, which we shall ignore, is linked to the diagonal term. In the remainder of the chapter, we will analyse the solution of the Hopfield model in two separate cases:
\begin{itemize}
    \item \emph{Low-load} case, where $P$ is finite (will be the case we will implement in Python)
    \item \emph{High-load} case, where $P \propto N$ and $\lim_{N \to \infty} \dfrac{P}{N} > 0$
\end{itemize}

\subsection{Solution in the Low-load Regime}
Using Laplace's method, we want to obtain an explicit expression for the free energy. In more detail, we analyse the so-called \textbf{Quenched Intensive Free-energy}
\begin{equation}
	f_{N,\beta,J} ^{Q} = -\dfrac{1}{\beta N} \mathbb{E}[ \log Z_{N,\beta,\bm{\xi}}] 
\end{equation}
where the average is a \emph{quenched average} over possible realisations of patterns given by
\begin{equation*}
	\mathbb{E}[ \ \cdot \ ] = 2^{-NP} \sum_{\bm{\xi} \in \{-1,+1\}^{N \times P} } [ \ \cdot  \ ] 
\end{equation*}
although, to lighten the notation, we will avoid emphasising that all the functions we are now going to calculate are quenched. We observe that the partition function can be written as
\begin{equation*}
\begin{aligned}
	Z_{N,\beta,\bm{\xi}} &= \sum_{\bm{\sigma}} \exp ( \dfrac{\beta}{2N} \sum_{i,j,\mu} \xi_{i}^{\mu} \xi_{j}^{\mu} \sigma_{i}  \sigma_{j}) = \sum_{\bm{\sigma}} \int \left[ \prod_{\mu=1}^{P} dm_{\mu} \delta(m_{\mu} - \sum_{i} \dfrac{\xi_{i}^{\mu} \sigma_{i}}{N}) \right] \exp( \dfrac{\beta N}{2} \sum_{\mu=1}^{P} m_{\mu}^{2}) \\
	&= \sum_{\bm{\sigma}} \int \int \left(  \prod_{\mu=1}^{P} dm_{\mu} \dfrac{N}{2\pi} d\tilde{m}_{\mu} \right) \exp \left( iN \sum_{\mu} \tilde{m}_{\mu} m_{\mu} - i \sum_{j,\mu} \tilde{m}_{\mu} \xi_{j}^{\mu} \sigma_{j} +  \dfrac{\beta N}{2} \sum_{\mu} m_{\mu}^{2} \right)  \\ 
	&=  \int \int \left(  \prod_{\mu=1}^{P} dm_{\mu} \dfrac{N}{2\pi} d\tilde{m}_{\mu} \right) \exp \left[ iN \sum_{\mu} \tilde{m}_{\mu} m_{\mu} + \sum_{j} \log(2 \cos (\sum_{\mu} \tilde{m}_{\mu} \xi_{j}^{\mu})) +  \dfrac{\beta N}{2} \sum_{\mu} m_{\mu}^{2} \right]
\end{aligned}
\end{equation*}
so that the partition function has a linear spin-dependency and can therefore sum directly over configurations. Notice that the extremality conditions in order to use Laplace's method are
\begin{equation*}
    \tilde{m}_{\mu} = i \beta m_{\mu} \ \  \text{with respect to} \ \  m_{\mu} \ \ \text{and} \ \  m_{\mu}^{*} = m_{\mu} = \dfrac{i}{N} \sum_{j} \xi_{j}^{\mu} \tanh(\tilde{m}_{\mu} \xi_{j}^{\mu}) \ \   \text{wrt to} \ \ \tilde{m}_{\mu}
\end{equation*}
from which we obtain
\begin{equation*}
	f_{\beta,\bm{\xi}} = \min_{\bm{m}} \left[ \frac{1}{2} \sum_{\mu} m_{\mu}^{2} - \frac{1}{\beta} \mathbb{E} \left( \log (2 \cosh \beta \sum_{\mu} m_{\mu} \bm{\xi}^{\mu} ) \right) \right]
\end{equation*}
where we use the fact that, for all $\xi_{j} \in \{-1,+1\}^{P}$
\begin{equation*}
	\dfrac{1}{N} \sum_{j} \log(2 \cos (\sum_{\mu} \tilde{m}_{\mu} \xi_{j}^{\mu})) = \dfrac{1}{N} \sum_{j} g(\xi_{j}) \overset{N \to \infty}{\longrightarrow} \mathbb{E}[g(\xi_{j})] .
\end{equation*}
\begin{theorem}
The quenched free energy of the Hopfield model in the thermodynamic limit and in low-load regime is 
\begin{equation}
	f_{\beta}^{Q} = \frac{1}{2} \sum_{\mu} (m_{\mu}^{*})^{2} - \frac{1}{\beta} \mathbb{E} \left[ \log 2 \cosh( \beta \sum_{\mu} m_{\mu}^{*} \bm{\xi}^{\mu}) \right]
\end{equation}
where the Mattis magnetization satisfy the self-consistency equations
\begin{equation}
\label{eq33}
	m_{\mu}^{*} = \mathbb{E} \left[ \bm{\xi}^{\mu} \tanh( \beta \sum_{\nu} m_{\nu}^{*} \bm{\xi}^{\nu}) \right] .
\end{equation}
\end{theorem}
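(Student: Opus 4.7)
The proof is essentially an application of the multidimensional Laplace (saddle-point) method to the integral representation of $Z_{N,\beta,\bm{\xi}}$ already derived just above the statement. The plan is to treat the exponent
\[
\Phi_N(\bm{m},\tilde{\bm{m}}) = iN\sum_\mu \tilde{m}_\mu m_\mu + \sum_j \log\bigl(2\cos(\textstyle\sum_\mu \tilde{m}_\mu \xi_j^\mu)\bigr) + \tfrac{\beta N}{2}\sum_\mu m_\mu^2
\]
as $N\cdot \phi_N$ for a $\bm{\xi}$-dependent function $\phi_N$, so that the $2P$-dimensional integral is dominated, as $N\to\infty$, by its saddle point. The first step is to verify (after deforming the $\tilde{m}_\mu$-contour appropriately, turning the $i$-phase into a genuine real saddle) that the extremality conditions are exactly the two stated above: $\tilde{m}_\mu = i\beta m_\mu$ and $m_\mu^* = \frac{i}{N}\sum_j \xi_j^\mu \tanh(\tilde m_\mu \xi_j^\mu)$.

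Next I would substitute $\tilde{m}_\mu = i\beta m_\mu$ back into $\phi_N$. The first term becomes $-\beta\sum_\mu m_\mu^2$; the cosine term turns into $\log(2\cosh(\beta\sum_\mu m_\mu \xi_j^\mu))$; combined with the $\tfrac{\beta}{2}\sum_\mu m_\mu^2$ piece, one obtains
\[
\phi_N(\bm{m}) = -\tfrac{\beta}{2}\sum_\mu m_\mu^2 + \tfrac{1}{N}\sum_{j=1}^N \log\bigl(2\cosh(\beta\textstyle\sum_\mu m_\mu \xi_j^\mu)\bigr).
\]
Then the saddle-point estimate gives $\tfrac{1}{N}\log Z_{N,\beta,\bm{\xi}} \to \max_{\bm m}\phi_\infty(\bm m)$, up to subleading Gaussian and logarithmic corrections that vanish after dividing by $N$.

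The final step is to take the quenched average. Because $\phi_N(\bm m)$ is, for fixed $\bm m$, an empirical mean of i.i.d.\ bounded functions of $\xi_j\in\{-1,+1\}^P$, the law of large numbers gives $\tfrac{1}{N}\sum_j \log(2\cosh(\beta\sum_\mu m_\mu \xi_j^\mu)) \to \mathbb{E}[\log(2\cosh(\beta\sum_\mu m_\mu \bm\xi^\mu))]$, and this convergence is uniform in $\bm m$ on compact sets, so the maximum and the expectation can be interchanged with the limit. Multiplying by $-1/\beta$ and substituting yields the claimed formula for $f^Q_\beta$. The self-consistency equation \eqref{eq33} then follows by setting $\nabla_{\bm m}\phi_\infty(\bm m)=0$ at the maximiser $\bm m^*$, which gives exactly $m_\mu^* = \mathbb{E}[\bm\xi^\mu\tanh(\beta\sum_\nu m_\nu^* \bm\xi^\nu)]$.

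The main obstacle is the rigorous justification of the saddle-point asymptotics: the exponent is complex-valued in $\tilde{\bm m}$, so one must shift the contour $\tilde m_\mu \mapsto \tilde m_\mu + i\beta m_\mu$ and check that the real part of the exponent is maximised on that shifted contour (steepest descent), not merely that the gradient vanishes. A related subtlety is the uniform control needed to commute the $\max_{\bm m}$ with the $N\to\infty$ limit under the quenched expectation; convexity/compactness arguments, or the bound $|\log 2\cosh(x)| \le |x| + \log 2$, handle this cleanly, but this is the place where the otherwise formal saddle-point computation has to be turned into a genuine proof.
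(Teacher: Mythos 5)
Your proposal follows essentially the same route as the paper: the paper derives the result by exactly this delta-function integral representation of $Z_{N,\beta,\bm{\xi}}$, the saddle-point conditions $\tilde{m}_\mu = i\beta m_\mu$ and $m_\mu^* = \frac{i}{N}\sum_j \xi_j^\mu\tanh(\tilde{m}_\mu\xi_j^\mu)$, back-substitution to get $\frac{1}{2}\sum_\mu m_\mu^2 - \frac{1}{\beta}\cdot\frac{1}{N}\sum_j\log(2\cosh(\beta\sum_\mu m_\mu\xi_j^\mu))$, and the law of large numbers to replace the empirical average over patterns by the quenched expectation. Your added remarks on contour deformation for the complex saddle and on uniform convergence needed to exchange the maximum with the limit address precisely the points the paper treats only formally, so the proposal is correct and, if anything, slightly more careful.
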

We can see from equation \ref{eq33} how the Curie-Weiss model is a special case of the Hopfield model. In fact, if we assume that only one pattern is the candidate to be retrieved, for istance $\bm{\xi}^{1}$, then we have $m_{1} \neq 0$ while $m_{\mu} = 0 \ \ \forall \mu > 1$ so
\begin{equation*}
	m_{\mu}^{*} = \mathbb{E} \left[ \bm{\xi}^{\mu} \tanh( \beta m_{1}^{*} \bm{\xi}^{1}) \right] =  \tanh( \beta m_{1}^{*})  \ \mathbb{E} \left[ \bm{\xi}^{\mu}  \bm{\xi}^{1} \right] =  \tanh( \beta m_{1}^{*}) \delta_{\mu,1}
\end{equation*}
which is exactly the Curie-Weiss law. Also in this case, there exists a critical  $\beta_{c}= 1$ such that if $\beta < \beta_{c}$ then there is a paramagnetic behavior and if $\beta > \beta_{c}$ then there is a ferromagnetic behavior. The most important difference between the Curie-weiss model and the Hopfield model in low-load regime is the possibility of encountering the system in a \textbf{Spurious state}: it can be interpreted as a system error during the retrieval process, where $\bm{m}^{(n)} = m_{n}(1...1,0...0)$ i.e. the magnetization is a vector with the first $n$ components equal to $1$ and the remaining equal to $0$. This solution is compatible with equation \ref{eq33}; in fact, if $\mu > n$, then the operator $\mathbb{E}$ factorizes because the argument of $\tanh(\cdot)$ is independent of $\xi^{\mu}$ so $\mathbb{E}[\xi^{\mu}] \cdot \mathbb{E}[\tanh \beta \sum_{\nu} m_{\nu}^{(n)} \xi^{\nu}] = 0$; while if $\mu \leq n$, then the equation has non-zero solution for $\beta > 1$. However, the solution associated with a spurious state is simply a free energy extremal point and not a global minimum point. In more detail, deriving an explicit expression of the Hessiana derivative of $f_{\beta}$, we obtain that if $n$ is odd then there exists $T_{c}^{(n)}$ such that for $T<T_{c}^{(n)}$ the spurious states $\bm{m}^{(n)}$ are local minima, while for n even there are saddle points. 

\subsection{Signal-to-noise Analysis}
We now ask whether Hebb's rule stabilises the stored pattern $\bm{\xi}^{\mu}$. For example, will the configuration $\bm{\sigma}$  given by $\sigma_{i} = \xi_{i}^{\mu} \ \ \forall i=1...N$ be dynamically stable? If we assume the absence of external fields and noise, the stability condition is equivalent to saying that $\sigma_{i} \varphi_{i}(\bm{\sigma}) > 0 \ \ \forall i=1...N$. In this way, the configuration does not vary during neural dynamics according to eq.\ref{eq5}, hence the pattern is a fixed point for the model. For istance, let's consider $\bm{\sigma} = \bm{\xi}^{1}$; recall that 
\begin{equation*}
	\varphi_i(\bm{\sigma}) = \sum_{j=1}^N J_{ij} \sigma_{j}(t) = \dfrac{1}{N} \sum_{j \neq i} \sum_{\mu=1}^{P} \xi_{i}^{\mu} \xi_{j}^{\mu} \sigma_{j}
\end{equation*}
thus we obtain
\begin{equation}
\label{eq34}
	\sigma_{i}  \varphi_{i}(\bm{\sigma}) = \xi_{i}^{1}  \varphi_{i}(\bm{\xi}^{1}) = \dfrac{N-1}{N} + \dfrac{1}{N} \sum_{j \neq i} \sum_{\mu=2}^{P}  \xi_{i}^{1}   \xi_{i}^{\mu}   \xi_{j}^{\mu}  \xi_{j}^{1} 
\end{equation}
where the first addend is associated with the \emph{signal} and the second with \emph{noise}.
We observe that the signal term is equal to $1$ in the TDL, while the noise term for very large $N$, denoted by $R$, verifies
\begin{equation*}
	|R| \overset{N >> 1}{\approx} \sqrt{\dfrac{P}{N}}
\end{equation*}
so if $P$ is finite and $N$ very large, the noise becomes negligible in relation to the signal and thus each pattern is effectively a fixed point. This result remains valid even if a finite fraction of spins is flipped away at random from one of the patterns. Although our aim was to build a model and its cost function in such a way as to have minima in correspondence with patterns, the non-linearity of the dynamics means that additional attractors are created. Indeed, let's consider a configuration given by
\begin{equation*}
	\sigma_{i}^{(3)} = \operatorname{sgn}( \xi_{i}^{1} +  \xi_{i}^{2} +  \xi_{i}^{3}) \ \ \forall i =1...N
\end{equation*}
whose Mattis magnetization equals 
\begin{equation*}
	m_{\mu}^{(3)} = \dfrac{1}{N} \sum_{i=1}^{N}  \operatorname{sgn}( \xi_{i}^{1} +  \xi_{i}^{2} +  \xi_{i}^{3}) \xi_{i}^{\mu}  \overset{N >> 1}{\approx}  \mathbb{E} [  \operatorname{sgn}( \xi_{i}^{1} +  \xi_{i}^{2} +  \xi_{i}^{3}) \xi_{i}^{\mu}] = \frac{1}{2} \ \ \forall \mu = 1,2,3
\end{equation*}
and $m_{\mu}^{(3)} = 0 \ \ \forall \mu > 3$. Then we have 
\begin{equation*}
	\sigma_{i}^{(3)} \varphi_{i}(\bm{\sigma}^{(3)}) = \sigma_{i}^{(3)}  \sum_{\mu=1}^{3} m_{\mu}^{(3)} \xi_{i}^{\mu} + \dfrac{1}{N} \sum_{j=1}^{N} \sum_{\mu > 3} \sigma_{i}^{(3)} \xi_{i}^{\mu} \xi_{j}^{\mu} \sigma_{j}^{(3)}
\end{equation*}
and we can establish that this configuration is also stable since, for very large $N$, we have the first term $S = \frac{1}{2} |  \xi_{i}^{1} +  \xi_{i}^{2} +  \xi_{i}^{3}| $  and the second term $R$ such that $R^{2} \approx \frac{P-3}{N}$ so as before, the noise term is negligible. Now, we want to use these results to obtain a \textbf{ Statistical Estimate of the Storage}, i.e. the number of patterns $P_{c} = \max \{ P \ \  \text{s.t. we have retrieval} \} = \alpha N$. \\
We observe that the $S$,$R$ terms in equation \ref{eq34} verify
\begin{equation*}
	S \overset{N \to \infty}{\longrightarrow} 1 \ \text{,} \ R  \overset{N \to \infty}{\longrightarrow} \mathcal{N}(0,\alpha)
\end{equation*}
thus we obtain
\begin{equation*}
	\mathbb{P} [ \sigma_{i} = \xi_{i}^{1} \ \text{stable} ] = \mathbb{P} [  \xi_{i}^{1} \varphi_{i}(\bm{\xi}^{1}) > 0 ] = \mathbb{P} [ R > -1 ] = \frac{1}{2} \left[ 1 + \operatorname{erf}(\frac{1}{2\alpha}) \right].
\end{equation*}
If we assume $\alpha << 1$, then we can approximate the error function as $\operatorname{erf}(x) \overset{x >> 1}{\approx} 1 - \frac{ \exp(-x^{2}) }{\sqrt{ x \pi}}$ thus achieving that 
\begin{equation*}
	\mathbb{P} [ \bm{\sigma} = \bm{\xi}^{1} \ \text{stable} ] \approx \left[ 1 - \sqrt{\frac{\alpha}{2 \pi}} \exp^{- \frac{1}{2 \alpha}}\right]^{N} \approx 1 - N \sqrt{\frac{\alpha}{2 \pi}}  \exp^{- \frac{1}{2 \alpha}} = 1 - N_{\text{err}}.
\end{equation*}
So let us impose the condition $N_{\text{err}} << 1$, which is satisfied for $\alpha = \frac{1}{2 \log N}$. In conclusion, the critical number of patterns to guarantee stability of pattern $\bm{\xi}^{1}$  is
\begin{equation}
	P_{c} = \dfrac{N}{2 \log N}
\end{equation}
and if we want stability also for the other patterns, one reaches $P_{c} = \frac{N}{4 \log N}$. In the particular case of the model implementation that we shall see in Chapter 5, we have $N=513$ and thus $P_{c} \approx 0.1$ \ .

\subsection{Solution in the High-load Regime}
In this section, we will analyse the model in the case of  \emph{High-load}, where $P \propto N$ and 
\begin{equation}
	\alpha :=  \lim_{N \to \infty} \dfrac{P}{N}  > 0.
\end{equation}
One possible approach is to use the so-called \textbf{Replica trick}. This method uses the following identity
\begin{equation*}
	\log(x) = \lim_{n \to 0} \dfrac{x^{n}-1}{n}
\end{equation*}
in order to calculate the quenched pressure
\begin{equation}
\label{eq37}
	A_{\beta}^{Q} = - \beta f_{\beta}^{Q} = \lim_{N \to \infty} \lim_{n \to 0} \dfrac{\mathbb{E} Z_{N,\beta,J}^{n} - 1}{Nn}.
\end{equation}
The trick is to consider distinct replicas $\bm{\sigma}^{(a)}$,$\bm{\sigma}^{(b)}$ that have the same initial distribution. This leads us to introduce the new order parameter called \textbf{overlap}
\begin{equation}
	q_{ab} = \dfrac{1}{N} \sum_{i=1}^{N} \sigma_{i}^{(a)} \sigma_{i}^{(b)}
\end{equation}
which measures the correlation between the two replicas. This method is very efficient for solving the \emph{Sherrington-Kirkpatrick model}, however it is more complicated to apply it to the Hopfield model (see appendix C for more informations about the SK model). \\
An alternative and much more sophisticated approach is the so-called  \textbf{Intepolation technique}. The main idea is to introduce an interpolating pressure $A_{N}(t)$ that recovers the original model for $t = 1$, while for $t = 0$ it corresponds to the pressure of a simpler model analytically addressable. Then, the expression for $A_{N}(t)$ is obtained by exploiting the fundamental theorem of calculus
\begin{equation}
	A_{N,\beta}^{Q} = A_{N}(1) = A_{N}(0) + \int_{0}^{1} \frac{d}{dt} \ A_{N}(t') \ dt' .
\end{equation}
The resolution is based on two starting assumptions. The first consists of the so-called \textbf{Replica Symmetry Ansatz}, in which it is assumed that $q_{ab} = q \ \forall a \neq b$. The second, consists of considering the patterns in the following way: the target pattern $\bm{\xi}^{1}$ is a Rademarcher pattern, while all others $\{ \bm{\xi}^{\mu} \}_{\mu = 2...P}$ are distributed as a standard Gaussian. Recall that the partition function is 
\begin{equation*}
\begin{aligned}
	 Z_{N,\beta,\bm{\xi}} &=  \sum_{\bm{\sigma}}  \exp(-\beta \mathcal{H}_{N,\beta,\bm{xi}}({\bm{\sigma}})  ) 
	&= \sum_{\bm{\sigma}}  \exp( \dfrac{\beta}{2N} \sum_{i,j} \xi_{i}^{1} \xi_{j}^{1} \sigma_{i} \sigma_{j} + \dfrac{\beta}{2N} \sum_{\mu>1} \sum_{i,j} \xi_{i}^{\mu} \xi_{j}^{\mu} \sigma_{i} \sigma_{j})
\end{aligned}
\end{equation*}
and using the identity 
\begin{equation}
	\int dz \exp( -A z^{2} + B z) = \sqrt{\frac{\pi}{A}} \exp(\dfrac{B^{2}}{4A})
\end{equation}
on the second term with $A=\frac{1}{2}$ and $B= \frac{\beta}{N} \left( \sum_{i} \xi_{i}^{\mu} \sigma_{i} \right)^{2}$ we obtain
\begin{equation*}
	Z_{N,\beta,\bm{\xi}} = \sum_{\bm{\sigma}} \int d \mu(z) \exp(  \dfrac{\beta}{2N} \sum_{i,j} \xi_{i}^{1} \xi_{j}^{1} \sigma_{i} \sigma_{j} + \sqrt{\dfrac{\beta}{N}} \sum_{\mu>1} \sum_{i} \xi_{i}^{\mu} \sigma_{i} z_{\mu})
\end{equation*}
where $\mu(z)$ is the Gaussian measure and $z_{\mu}$ is a real variable. The orders parameters shall be the overlap $q_{12}$, the Mattis magnetization $m_{1}$ and also
\begin{equation*}
	r_{12} = \dfrac{1}{P-1} \sum_{\mu} z_{\mu}^{(1)} z_{\mu}^{(2)} \ , \  r_{11} = \dfrac{1}{P-1} \sum_{\mu} z_{\mu}^{(1)} z_{\mu}^{(1)}.
\end{equation*}
Now, we will simply illustrate the results required to arrive at the solution of the Hopfield model, leaving the very tiring calculations to the reader. Let be $t \in \mathbb{R}^{+}$ , $A,B,C,D$ constants to be set a posteriori and $J_{i}, \tilde{J}_{\mu} \sim \mathcal{N}(0,1)$; then the partition function is
\begin{equation}
 \begin{split}
    Z_{N,\beta,\bm{\xi}} (t; J, \tilde{J})&= \sum_{\bm{\sigma}} \int d \mu(z) \exp\left\{ \frac{t}{2} \beta N m_{1}^{2} + \sqrt{\frac{t\beta}{N}} \sum_{\mu,i} \xi_{i}^{\mu} \sigma_{i} z_{\mu} \right. \\
    &+ (1-t)NDm_{1} + (1-t) \frac{C}{2} \sum_{\mu} z_{\mu}^{2} \\
    &+ \left. \sqrt{1-t} A \sum_{i} J_{i} \sigma_{i} + \sqrt{1-t} B \sum_{\mu} \tilde{J}_{\mu} z_{\mu} \right\}
\end{split}
\end{equation}
and the corresponding quenched statistical pressure is
\begin{equation*}
	A_{N}(t) = \dfrac{1}{N} \mathbb{E} \log Z_{N,\beta,\bm{\xi}}(t) \ , \ A(t) = \lim_{N \to \infty} A_{N}(t) .
\end{equation*}
Recall that the $\langle \cdot \rangle$ average of the observable $O$  is
\begin{equation}
\langle O \rangle = \mathbb{E} [\omega_{N,\beta,J} (O)] = \mathbb{E} \left[ \frac{\sum_{\bm{\sigma}} O(\bm{\sigma}) \exp(-\beta \mathcal{H}(\bm{\sigma}))}{\sum_{\bm{\sigma}} \exp(-\beta \mathcal{H}(\bm{\sigma}))} \right] .
\end{equation}
\begin{proposition}
	The derivative of the quenched statistical pressure at finite $N$ is 
	\begin{equation*}
		\dfrac{d}{dt} A_{N}(t) = \dfrac{\beta}{2} \langle m_{1}^{2} \rangle + \dfrac{\beta P}{2N} ( \langle r_{11} \rangle - \langle r_{12} q_{12} \rangle ) - D \langle m_{1} \rangle - \dfrac{A^{2}}{2} (1- \langle q_{12} \rangle ) - \dfrac{\beta^{2}}{2} (  \langle r_{11} \rangle - \langle r_{12}  \rangle ) - \dfrac{C}{2}  \langle r_{11} \rangle
	\end{equation*}
	and in the thermodynamic limit we have 
	\begin{equation*}
	\dfrac{d}{dt} A(t) = \dfrac{\beta^{2}}{2} \overline{m} - \dfrac{\beta \alpha}{2} \overline{r} (1 - \overline{q})	
\end{equation*}
	with some fixed $ \overline{m}, \overline{q}, \overline{r}$ and $A = \beta \alpha  \overline{r} , B = \beta  \overline{q} , C = \beta (1 -  \overline{q}) , D = \beta  \overline{m}$.
\end{proposition}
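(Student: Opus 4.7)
The plan is to differentiate $A_N(t)=\tfrac{1}{N}\mathbb{E}\log Z_{N,\beta,\bm{\xi}}(t;J,\tilde J)$ by pushing $\partial_t$ through $\mathbb{E}\log$, giving $\tfrac{d}{dt}A_N(t)=\tfrac{1}{N}\mathbb{E}\langle \partial_t\Phi\rangle_t$, where $\Phi$ denotes the exponent of the interpolating partition function and $\langle\cdot\rangle_t$ is the associated Boltzmann--Gibbs thermal average. The six addends of $\Phi$ are then handled separately. The three \emph{polynomial-in-$t$} terms, $\tfrac{t\beta N}{2}m_1^2$, $(1-t)NDm_1$ and $(1-t)\tfrac{C}{2}\sum_\mu z_\mu^2$, differentiate trivially and contribute $\tfrac{\beta}{2}\langle m_1^2\rangle$, $-D\langle m_1\rangle$, and $-\tfrac{C(P-1)}{2N}\langle r_{11}\rangle$ respectively, using the definition $\tfrac{1}{P-1}\sum_\mu \langle z_\mu^2\rangle=\langle r_{11}\rangle$.

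The real work is on the three \emph{square-root-in-$t$} terms, which are linear in Gaussian disorder: the patterns $\xi_i^\mu$ for $\mu>1$ (treated as $\mathcal{N}(0,1)$ in this interpolation) together with the auxiliary $J_i,\tilde J_\mu\sim\mathcal{N}(0,1)$. Differentiating in $t$ brings down a factor $\tfrac{1}{2\sqrt{t}}$ or $-\tfrac{1}{2\sqrt{1-t}}$ multiplying the Boltzmann average of a quantity linear in the underlying Gaussian. On each such expectation I would invoke \textbf{Stein's lemma}, $\mathbb{E}[XF(X)]=\mathbb{E}[\partial_X F(X)]$: the derivative of a thermal observable reads $\partial_X\langle g\rangle_t=\langle\partial_X g\rangle_t+\langle g\,\partial_X\Phi\rangle_t-\langle g\rangle_t\langle\partial_X\Phi\rangle_t$, the last two combining — via the introduction of a second replica — into a two-replica connected average. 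The explicit $\sqrt{t}$ and $\sqrt{1-t}$ prefactors then cancel neatly against the chain-rule inverses, producing $t$-independent expressions. Concretely, the coupling $\sqrt{1-t}A\sum_i J_i\sigma_i$ yields $-\tfrac{A^2}{2}(1-\langle q_{12}\rangle)$ (the off-diagonal replica piece being $\sigma_i^{(1)}\sigma_i^{(2)}$ summed to $q_{12}$); the coupling $\sqrt{1-t}B\sum_\mu\tilde J_\mu z_\mu$ produces a term proportional to $\langle r_{11}\rangle-\langle r_{12}\rangle$; and the pattern coupling $\sqrt{t\beta/N}\sum_{i,\mu}\xi_i^\mu\sigma_iz_\mu$ produces $\tfrac{\beta P}{2N}(\langle r_{11}\rangle-\langle r_{12}q_{12}\rangle)$, the key observation being that $\xi_i^\mu$ couples simultaneously to a $\sigma$-leg and a $z$-leg so that the IBP contraction generates the non-factorising combination $r_{11}-r_{12}q_{12}$. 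Summing all six contributions reproduces the finite-$N$ formula of the statement.

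For the TDL formula I would invoke the \textbf{Replica Symmetric Ansatz}, under which $\langle q_{12}\rangle\to\bar q$, $\langle r_{12}\rangle\to\bar r$, $\langle m_1^2\rangle\to\bar m$, $\langle r_{12}q_{12}\rangle\to\bar r\bar q$, together with $P/N\to\alpha$. Substituting the prescribed $A=\beta\alpha\bar r$, $B=\beta\bar q$, $C=\beta(1-\bar q)$, $D=\beta\bar m$ and collecting, the $\tfrac{\beta}{2}\langle m_1^2\rangle$ and $-D\langle m_1\rangle$ pieces consolidate into $\tfrac{\beta^2}{2}\bar m$ by the self-consistency of the Mattis magnetisation; the $\langle r_{11}\rangle$ contributions mutually cancel thanks to the choice $C=\beta(1-\bar q)$; and the remaining two-replica pieces collapse into $-\tfrac{\beta\alpha}{2}\bar r(1-\bar q)$. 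The chief obstacle is the Gaussian integration by parts on the $\xi^\mu$ term, where $\xi_i^\mu$ sits in two legs and the replica expansion must be tracked carefully to produce the non-factorising $r_{11}-r_{12}q_{12}$; a secondary subtlety is the separate, non-IBP treatment of the Rademacher pattern $\mu=1$, which contributes only through the explicit $m_1^2$ term. Once these two bookkeeping issues are dispatched, the remaining assembly and TDL reduction are essentially algebraic.
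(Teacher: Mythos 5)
The paper itself gives no proof of this proposition --- it explicitly leaves ``the very tiring calculations to the reader'' --- so there is no argument to compare yours against line by line. Your outline is the standard Guerra-interpolation derivation that the statement is clearly built on, and it is sound: writing $\frac{d}{dt}A_N(t)=\frac{1}{N}\mathbb{E}\langle\partial_t\Phi\rangle_t$, handling the polynomial-in-$t$ pieces directly, and applying Gaussian integration by parts (Stein's lemma) with a second replica to the $\sqrt{t}$ and $\sqrt{1-t}$ couplings does produce the listed contributions, including the non-factorising $\langle r_{11}\rangle-\langle r_{12}q_{12}\rangle$ coming from the fact that $\xi_i^\mu$ contracts simultaneously onto a $\sigma$-leg and a $z$-leg. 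Your remark that the $B$- and $C$-terms should carry explicit $(P-1)/N$ factors is correct and in fact exposes an inconsistency in the finite-$N$ formula as printed; your separate, non-IBP treatment of the Rademacher pattern $\mu=1$ is also right.

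The one genuine soft spot is the thermodynamic-limit consolidation of the magnetisation terms. Under self-averaging $\langle m_1^2\rangle$ concentrates on $\overline{m}^{\,2}$, not on $\overline{m}$, so with $D=\beta\overline{m}$ the two pieces combine as $\tfrac{\beta}{2}\overline{m}^{\,2}-\beta\overline{m}^{\,2}=-\tfrac{\beta}{2}\overline{m}^{\,2}$; your assertion that they ``consolidate into $\tfrac{\beta^2}{2}\overline{m}$ by the self-consistency of the Mattis magnetisation'' does not follow from anything you set up and is dimensionally out of step with the rest of your computation. (The expression $\tfrac{\beta^2}{2}\overline{m}$ in the proposition is itself almost certainly a misprint for $-\tfrac{\beta}{2}\overline{m}^{\,2}$, which is what the sum rule $A(1)=A(0)+\int_0^1 A'(t)\,dt$ needs in order to reproduce the $-\tfrac{\beta}{2}\bm{m}^2$ term of the theorem that follows.) Replace that one sentence with the honest two-line computation and the rest of your assembly --- cancellation of the $\langle r_{11}\rangle$ contributions via $C=\beta(1-\overline{q})$ and collapse of the two-replica pieces into $-\tfrac{\beta\alpha}{2}\overline{r}(1-\overline{q})$ --- goes through.
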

\begin{theorem}
	In the TDL and under RS assumption, the quenched statistical pressure of the Hopfiel model is 
	\begin{equation}
	\begin{split}
    A_{\beta, \alpha} (\bm{m},q,r ) &= - \dfrac{\beta}{2} \bm{m}^{2}  + \log 2 - \dfrac{\alpha \beta}{2} - \beta \alpha \overline{r} (1 - \overline{q}) + \dfrac{\beta \alpha \overline{q}}{2 (1 - \beta (1 - \overline{q})) } \\
    &\quad - \dfrac{\alpha}{2} \log(1 - \beta (1 - \overline{q})) + \mathbb{E} \left[ \log 2\cosh (\beta \overline{m} + J \sqrt{\beta \alpha \overline{r}}) \right]
\end{split}
	\end{equation}
	where $\overline{m}, \overline{q}, \overline{r}$ fulfill the conditions
	\begin{equation*}
    \left\{
    \begin{aligned}
         \overline{q} &= \mathbb{E} \left[ \tanh^{2} ( \beta  \overline{m} +  J \sqrt{\beta \alpha \overline{r}}) \right] \\
         \overline{m} &= \mathbb{E} \left[ \tanh ( \beta  \overline{m} +  J \sqrt{\beta \alpha \overline{r}}) \right]    \\
         \overline{r} &= \dfrac{\beta \overline{q}}{(1 - \beta(1 - \overline{q}))^{2}}
     \end{aligned}
    \right.
    \end{equation*}	
\end{theorem}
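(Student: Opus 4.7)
The plan is to apply the interpolation sum rule $A_{N,\beta}^Q = A_N(1) = A_N(0) + \int_0^1 \frac{d}{dt'} A_N(t')\,dt'$ from the preceding proposition, with the free parameters $A,B,C,D$ tuned so that at $t=1$ one recovers the original Hopfield partition function (up to the constant $-\alpha\beta/2$ coming from the diagonal of $\mathcal{H}$), while at $t=0$ the interpolating weight splits into a one-body spin term and a Gaussian quadratic form in the auxiliary variables $z_\mu$, both exactly computable. Under the replica-symmetric ansatz all averages of type $\langle(\cdot)_{12}-\overline{(\cdot)}\rangle$ appearing in the finite-$N$ derivative vanish, so the integrand becomes $t$-independent and the integration reduces to multiplication by $1$.

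The $t=0$ endpoint is the main calculation. Writing $N D m_1 = D\sum_i \xi_i^1 \sigma_i$, the spin trace factorises as $\prod_i 2\cosh(D\xi_i^1 + A J_i)$; since $\xi_i^1$ is Rademacher and $J_i$ is centred Gaussian (so $\xi_i^1 J_i$ has the same law as $J_i$), averaging collapses to the single-site contribution $\log 2 + \mathbb{E}[\log\cosh(\beta\overline{m} + J\sqrt{\beta\alpha\overline{r}})]$ per spin, which accounts for the $\log 2$ and the hyperbolic expectation in the target formula. The $z_\mu$ variables decouple across the pattern index $\mu$; each factor is a one-dimensional Gaussian integral with inverse variance $1-C = 1-\beta(1-\overline{q})>0$ (this positivity being precisely the RS regime), and after averaging over $\tilde{J}_\mu$ and invoking $P/N \to \alpha$ the auxiliary piece yields $-\frac{\alpha}{2}\log(1-\beta(1-\overline{q})) + \frac{\alpha\beta\overline{q}}{2(1-\beta(1-\overline{q}))}$. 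Adding the integrated $t$-derivative supplied by the proposition and the constant $-\alpha\beta/2$ from the diagonal Hebb term reproduces the stated expression for $A_{\beta,\alpha}(\bm{m}, q, r)$.

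The self-consistency equations then emerge by imposing stationarity of $A_{\beta,\alpha}$ in the three order parameters. Differentiating in $\overline{m}$ produces directly $\overline{m} = \mathbb{E}[\tanh(\beta\overline{m} + J\sqrt{\beta\alpha\overline{r}})]$; differentiating in $\overline{r}$ and applying Gaussian integration by parts, $\mathbb{E}[J f(J)] = \mathbb{E}[f'(J)]$, converts the linear-in-$J$ expectation into the $\tanh^2$ expectation and yields $\overline{q} = \mathbb{E}[\tanh^2(\beta\overline{m} + J\sqrt{\beta\alpha\overline{r}})]$; stationarity in $\overline{q}$ gives the algebraic relation $\overline{r} = \beta\overline{q}/(1-\beta(1-\overline{q}))^2$. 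The main conceptual obstacle is precisely the RS reduction inside the sum rule: proving that $\langle q_{12}\rangle \to \overline{q}$ and $\langle r_{12}\rangle \to \overline{r}$ as $N\to\infty$, so that the interpolation identity is an equality rather than a Guerra-style inequality, is the full content of the replica-symmetric assumption and cannot be derived from the interpolation alone; the remainder of the argument is disciplined bookkeeping of Gaussian integrals.
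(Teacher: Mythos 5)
Your proposal follows exactly the route the paper takes (and largely leaves to the reader): the Guerra interpolation sum rule with the stated $A,B,C,D$, the factorised one-body spin trace and decoupled Gaussian integrals over the $z_\mu$ at $t=0$, the RS assumption rendering the $t$-derivative constant, and stationarity plus Gaussian integration by parts for the self-consistency equations. The computation of the Cauchy condition $A_N(0)$ and the attribution of the $-\alpha\beta/2$ term to the diagonal of the Hebbian Hamiltonian are both correct, so this is essentially the paper's own argument carried out in more detail.
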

In conclusion, the free energy of the system is given by $f_{\beta, \alpha}(\bm{m},q) = - \frac{A_{\beta, \alpha}(\bm{m},q)}{\beta}$ and the extremality conditions are
\begin{equation}
\begin{aligned}
\label{eq44}
	\bm{m} &= \int d \mu(z) \mathbb{E} \left\{ \bm{\xi} \tanh \left[ \beta \left( \bm{m} \cdot \bm{\xi} + \dfrac{\sqrt{\alpha q}}{ 1 - \beta (1-q)} z \right) \right] \right\} \\
	q &= \int d \mu(z) \mathbb{E} \left\{ \bm{\xi} \tanh^{2} \left[ \beta \left( \bm{m} \cdot \bm{\xi} + \dfrac{\sqrt{\alpha q}}{ 1 - \beta (1-q)} z \right) \right] \right\} .
\end{aligned}
\end{equation}
See Appendix A for some graphic results.

\subsection{Phase Diagram}

 \begin{figure}[htpb]
    \centering
    \includegraphics[width=0.8\textwidth]{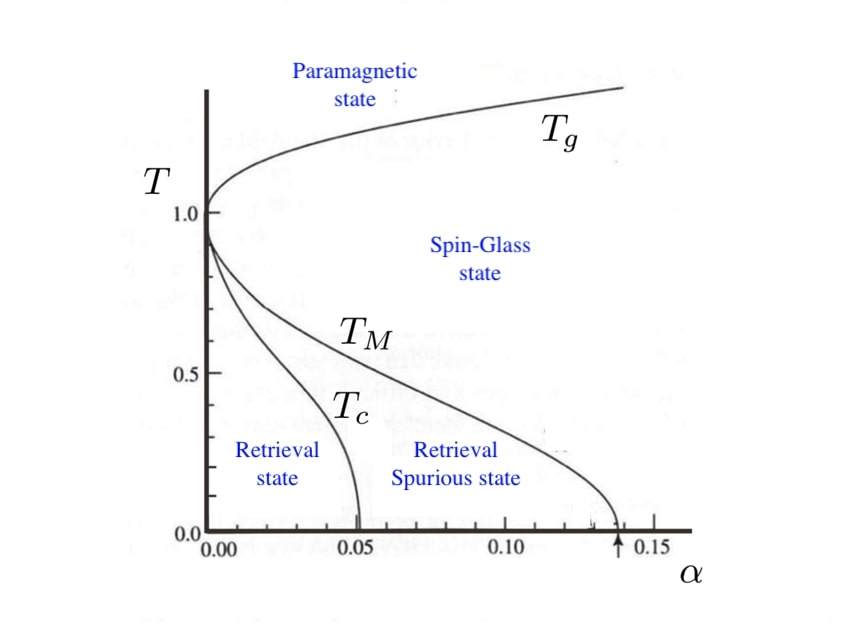}
    \caption{Phase diagram of the Hopfield model at high-load}
  \label{fig6}
  \end{figure}
  
We now have all the tools to analyse the phase diagram of the Hopfield model in detail. As can be seen from the figure \ref{fig6}, four different states can be verified: 
\begin{itemize}
    \item \emph{Retrieval state}, i.e. $\langle m \rangle \neq 0$, $\langle q \rangle \neq 0$.
     \item \emph{Retrieval Spurious state}, i.e. $\langle m \rangle \neq 0$, $\langle q \rangle \neq 0$.
    \item \emph{Spin-glass state}, i.e.$\langle m \rangle = 0$, $\langle q \rangle \neq 0$.
    \item \emph{Paramagnetic state}, i.e.$\langle m \rangle = 0$, $\langle q \rangle = 0$.
\end{itemize}

Firstly, we observe that the retrieval state is characterised by a non-zero magnetization, while the non-retrieval state is characterised by a null, which means that the final configuration is completely random. Secondly, the Retrieval state may be perfect and thus have overlap equal to $1$, or it may be in a Spurious state and retrieve only part of the pattern, thus having the overlap between different replicas be less than $1$. However, this subdivision is more important from the point of view of model applications than from an analytical point of view.. The separation lines between the different states were derived by analytically solving the model for each grid point in order to divide the space $(\alpha,T)$ into the four regions. In more detail, we have that 
\begin{equation}
	T_{g} = 1 + \sqrt{\alpha}
\end{equation}
while $T_{c}$ and $T_{M}$ were derived numerically by comparing the free energies of the pure states and those of the spin-glass states for the same $\alpha$ value.

\section{Audio Retrieval}
 \begin{figure}[htpb]
    \centering
    \includegraphics[width=0.9\textwidth]{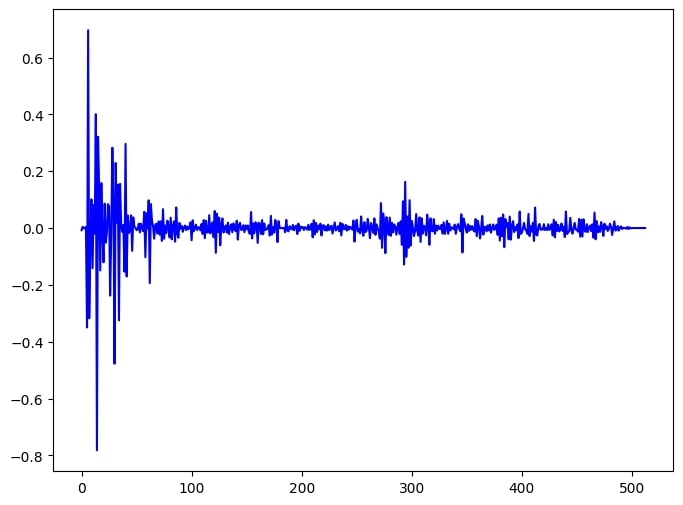}
    \caption{Representation of the average vector of Fourier coefficients of the number $65$}
  \label{fig7}
  \end{figure}

\begin{lstlisting}[style=custompython, caption={Dataset construction}, label=code1,float, breaklines=false, basicstyle=\small\ttfamily]
import numpy as np
import librosa

def fft_signal(audio_path):
    audio, sr = librosa.load(audio_path)
    stft_signal = librosa.stft(audio, n_fft=1024, hop_length=512)
    stft_coeff = np.mean(stft_signal, axis=1)
    return audio, sr, stft_signal, stft_coeff

def audio_importation():
    audio_objects = []
    for i in range(80):
        file_path = f'/Users/silver22/registrazioni/{i}.wav'
        audio, sr, stft_signal, stft_coeff = fft_signal(file_path)
        audio_object = {'audio':audio,'sr':sr,'stft_signal': \
 stft_signal,'stft_coeff':stft_coeff}
        audio_objects.append(audio_object)
    return audio_objects

def audio_binarization(stft_coeff):
    binary = (stft_coeff > 0).astype(int)
    binary = 2 * binary - 1
    return binary
\end{lstlisting}

\begin{lstlisting}[style=custompython, caption={Hopfield Network implementation},label=code2,float, breaklines=false, basicstyle=\small\ttfamily]
import numpy as np
from tqdm import tqdm
import random

class HopfieldNetwork(object):    
  
    def train_weights(self, train_data):
        print("Start to train weights...")
        self.num_neuron = train_data.shape[1]
        self.num_patterns = train_data.shape[0]
        self.patterns = train_data
        J = np.zeros((self.num_neuron, self.num_neuron))
        for i in tqdm(range(0, self.num_neuron)):
            for j in range(i + 1, self.num_neuron):
                for mu in range(0, self.num_patterns):
                    J[i, j] += train_data[mu, i] * \
  train_data[mu, j]
        J = (J + J.T) / self.num_neuron
        self.J = J 
    
    def predict(self, test_data, temperature):
        sigma = test_data.copy().T
        sigma = sigma.T
        N=self.num_neuron 
        K=self.num_patterns 
        alpha = K/N
        T = temperature
        beta = 1.0 / T
        MCstat_step=50
        MCrelax_step=1
        magn_mattis_matrix = np.zeros((self.num_patterns, \ 
  MCstat_step))
        for stat in range(0, MCstat_step):
            for step in range(0,MCrelax_step):
                for i in range(0,N):
                    k = np.random.randint(0, N)  
                    deltaE=2*sigma[k]*np.dot(sigma,self.J[:,k]) 
                    ratio=np.exp(-beta*deltaE)
                    gamma=np.minimum(ratio,1)
                    if np.any(random.uniform(0,1) < gamma):
                        sigma[k] = -sigma[k]  #flipping
            for mu in range(0,self.num_patterns):
                magn_mattis_matrix[mu,stat]= np.dot(sigma, \ 
 self.patterns[mu,:])/N
        predicted = sigma
        return predicted,magn_mattis_matrix
   \end{lstlisting}

A practical application of the Hopfield model to a real dataset is now illustrated. The dataset consists of $81$ voice recordings, in which all numbers from $0$ to $80$ are said. The format of these recordings is ``.wav`` and the Code \ref{code1} in \textbf{Python} was used to transform voice patterns into binary patterns for the network to store. We note that the \emph{Fourier transform} was used to reduce the dimensionality of the audio data; the parameters used (n-fft,hop-length) were set to ($1024,512$) to maintain good audio quality and are powers of two because this provides an enormous computational advantage. Subsequently, an average was taken over each time instant so that a numerical vector of length $513$ representing our audio data could be obtained. In the figure \ref{fig7} the vector of the number $65$ is represented. The second function of the algorithm creates a data structure in which the true audio signal with its sampling rate, matrix and mean vector of the Fourier transform is saved for each audio date. We can clearly see that the vector of coefficients is centred in $0$, so the third function takes care of transforming the vector into a pattern with values in $+1,-1$ so that it can be handled by the Hopfield network. This binarization method, although very crude, turns out to be sufficient for audio retrieval. However, the patterns do not turn out to be orthogonal: in fact, on average, each pattern has $50\%$ of the components equal to the other patterns. Let us now move on to the construction of the model. As we can see from Code \ref{code2}, the Hopfield model can be implemented with two simple functions: the first (train-weights) takes as input a matrix containing the patterns to be stored as rows, and creates the $J$ matrix of synaptic weights according to Hebb's rule. The second (predict), simulates the process of equation \ref{eq10} using \emph{Monte Carlo simulations} (see appendix B for further details). Now, let us analyse the performance of the pattern by doing the following test: we randomly take a pattern from among those stored, and corrupt it by inverting a percentage of components equal to a randomness $r$. We then feed this configuration to the model and compare the Mattis magnetisation relative to that specific pattern. Clearly, we will have that the model has successfully retrieved if m is about $1$, while the prediction has failed if m is less than $0.5$. We will use this heuristic to create graphs that play the same role as the phase diagram, i.e. we will say that we are in a 
\begin{itemize}
    \item  Retrieval state  if $m_{\mu}$ is greater than $0.9$
    \item  Spurious state if $m_{\mu}$ is between 0.6 and $0.9$
    \item non-Retrieval state if $m_{\mu}$ is less than $0.6$.
\end{itemize}
In more detail, we will analyse this process iteratively. That is, at each step a pattern is added to the model and tested; in this way, we want to test the performance of the model as the load changes. In order to make the test as generic as possible, and considering that the voice patterns that make up the dataset have a strong correlation between them (think, for example, of the numbers $21, 22$, etc...), we will perform this pattern addition randomly. At each step, i.e. at a fixed number of patterns, the model is made to work with a temperature ranging from $0.01$ to $2$. In the colourplots shown in the figure, the magnetization value at the end of the Monte Carlo simulation is used. In particular, we show the phase diagram with respect to different randomness values with which to corrupt the test data. 

\begin{figure}[htpb]
    \centering 
    \includegraphics[width=0.9\textwidth]{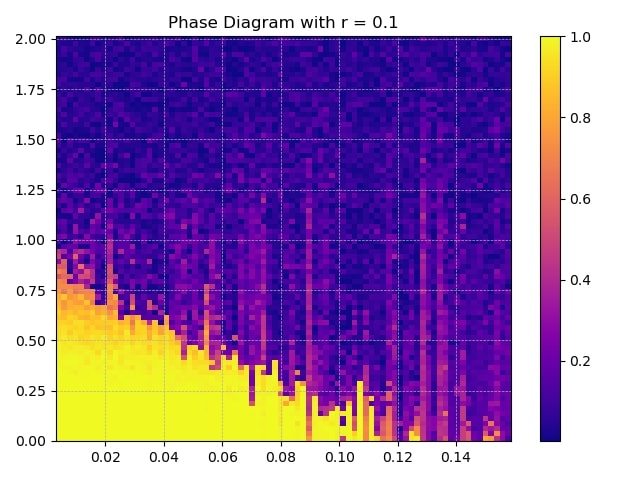}
 \end{figure}
\begin{figure}[htpb]
    \centering 
    \includegraphics[width=0.9\textwidth]{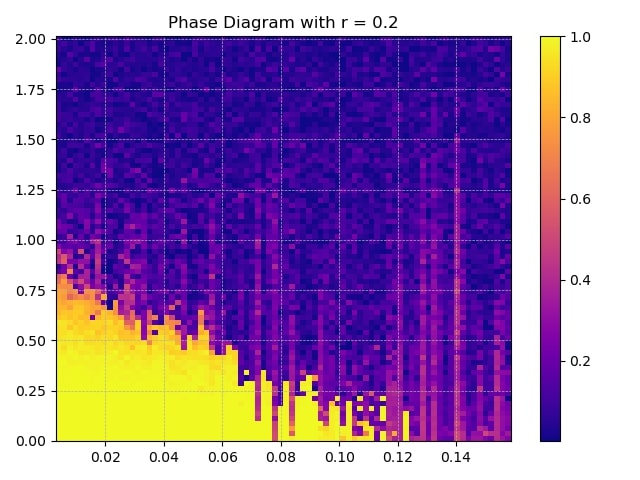}
 \end{figure}
\begin{figure}[htpb]
    \centering 
    \includegraphics[width=0.9\textwidth]{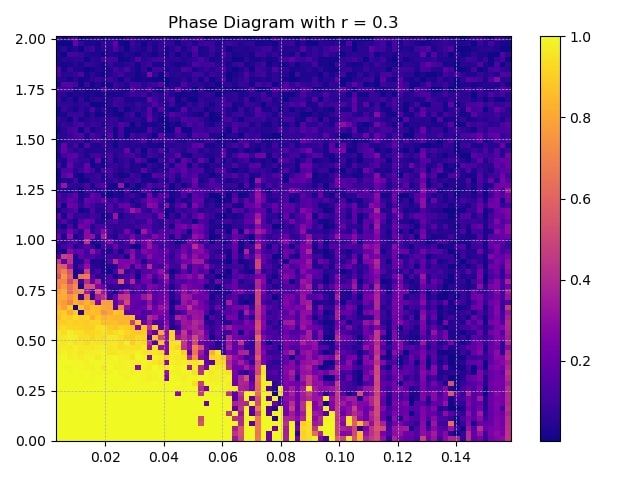} 
    \end{figure}
\begin{figure}[htpb]
    \centering 
    \includegraphics[width=0.9\textwidth]{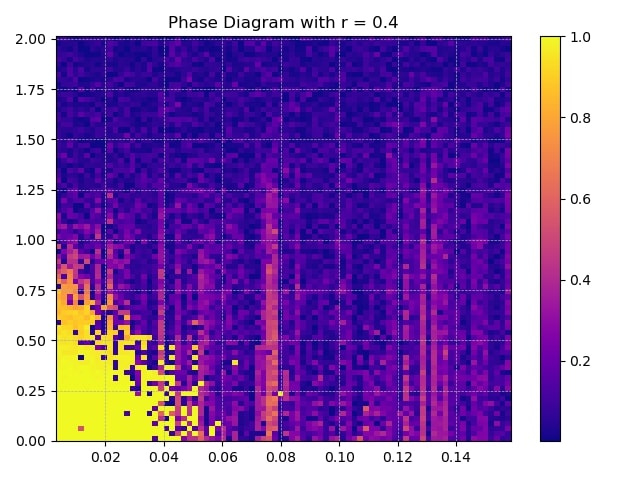}
 \end{figure}
 \begin{figure}[htpb]
    \centering 
    \includegraphics[width=0.9\textwidth]{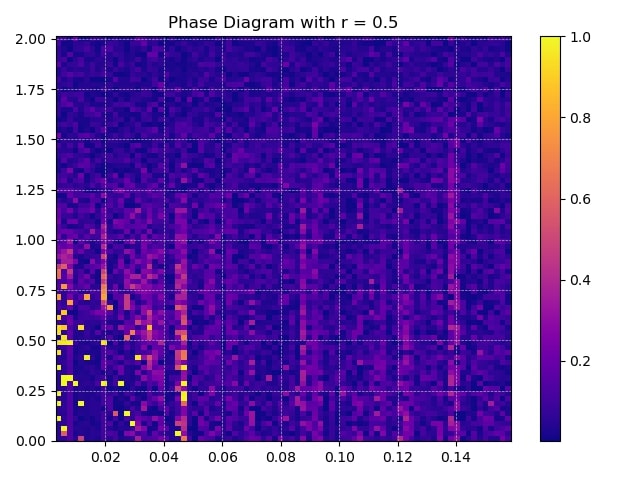}
 \end{figure}

\begin{lstlisting}[style=custompython, caption={Creation of one colourplot with r=0.2},label=code3,float, breaklines=true, basicstyle=\small\ttfamily]
import numpy as np
import matplotlib.pyplot as plt
from tqdm import tqdm
import random
def get_corrupted(self, pattern,r):
        sample_size = int(self.num_neuron*r)
        I = np.random.choice(len(pattern),size = sample_size, \ 
 replace=False)
        corrupted = pattern.copy()
        for i in range(len(I)):
            corrupted[I[i]] = -1*corrupted[I[i]]
        return corrupted
 
audio_object = audio_importation()
num_patterns = 81  
I = np.arange(81)
np.random.shuffle(I)
I_new = I.copy()
patterns_bin_total_list = []
for iter in range(2, num_patterns+1):
    binary_list=[]
    patterns_bin = np.zeros((iter,len(audio_object[0] \ 
['stft_coeff'])))
    for i in range(0,iter):
        stft_coeff = audio_object[I_new[i]]['stft_coeff']
        binary = audiobin2.audio_binarization(stft_coeff)
        patterns_bin[i,:] = binary
    patterns_bin_total_list.append(patterns_bin)
    
model = HopfieldNetwork()
T = T = np.linspace(0.01, 2, 80)
A = np.zeros(80)
for i in range(0,80):
    A[i] = (i+2)/513
magns = np.zeros((len(T),len(A)))
for iter in range(0,num_patterns-1):
    model.train_weights(patterns_bin_total_list[iter])
    print("We're using",iter+2, "patterns")
    rand_test = np.random.choice(range(model.num_patterns))
    randomness = 0.2
    test = patterns_bin[rand_test,:]
    test_corrupted = get_corrupted(test,randomness)
    for t in tqdm(range(0,len(T))) :
        predicted,magnetization = model.predict(test_corrupted, \
 temperature=T[t])
        magns[t,iter] = np.abs(magnetization[rand_test,-1])
        
plt.pcolormesh(A, T, magns, cmap='plasma')
plt.colorbar()
plt.grid(True, linestyle='dashed', linewidth=0.5)
plt.title('Phase Diagram')
plt.tight_layout()
plt.show()
\end{lstlisting}

Below you can find the Code \ref{code3} that calculates the following graphs.

\section{Conclusion}
In this last section we want to draw conclusions about the results obtained in the previous chapter. In particular, we want to highlight the limitations and strengths of the Hopfield model. Firstly, we can say that this model achieves astonishing results compared with its simplicity from both an implementation and a numerical point of view. However, the fact that one must necessarily work with patterns with values in $\{-1,+1\}$ can be a problem in real life, as the binarization of real data may result in a significant loss of the information contained by them. In our case, it was possible to work well with audio data thanks to the Fourier transform, but this may not be the case with other types of data. Secondly, the 'empirical' phase diagrams obtained in Chapter $5$ and the theoretical one obtained in Chapter $4$ show a property of the model that is also common to the biological brain: there is a maximum limit of information that the network can store while maintaining good performance. This can be a great limitation in today's world, since the amount of data is increasingly large and therefore it would be computationally unsustainable to store an NxN matrix with $N>>1$.
Thirdly, we have seen from the various graphs in Chapter $5$ that the model can recognise corrupted patterns up to a certain threshold of randomness; this means that if the network encounters a pattern that is excessively corrupted, it will not be possible to recover the original pattern. This may also be a limitation of the model, because there are currently other neural network models that are able to clean a data item from noise more efficiently (e.g. Autoencoders). 
Following the analysis of these model limitations, further models much more complex than Hopfield's are already being developed, such as Dense Associative Memories. In my humble opinion, these neural network models are fascinating and will continue to be in the spotlight of many researchers around the world, as, unlike almost all Deep Learning models, they have a very detailed mathematical background that allows for a theoretical analysis of the model's capabilities. Having a theory behind a model is very advantageous from a practical as well as an economic point of view, since it would be possible to choose the hyperparameters a priori and not having to test the model using many GPUs. In conclusion, the Hopfield model performs well as an associative memory model for image or audio data and is capable of recovering significantly corrupted patterns.

\section*{Appendix A: Self-Consistency Equations}
\label{A}
Let us analyse the self-consistency equations that characterise the calculation of the extreme points of free energy in the case of the Curie-Weiss model. The equation is 
\begin{equation*}
	m = \tanh[ \beta (Jm + h)]
\end{equation*}
and we show the solutions as the parameters $\beta,J,h$ vary using the fixed point method. Notice that if $\beta > 1$ then $T<1$ and we have more solutions, while if $\beta \leq 1$ we have a unique solution.
\begin{figure}[htpb]
    \centering
    \begin{subfigure}{0.32\textwidth}
        \includegraphics[width=\linewidth]{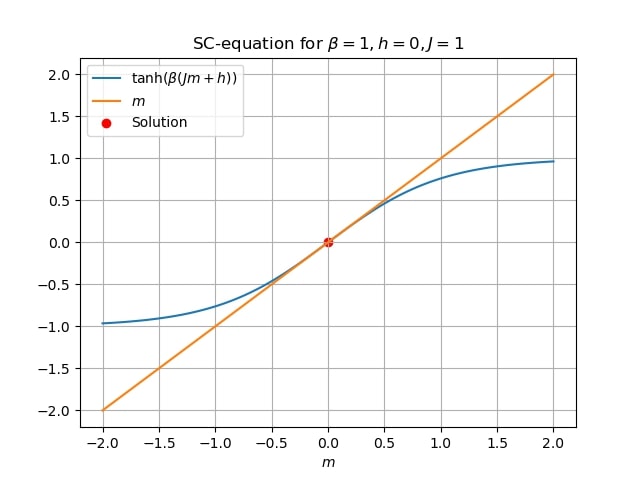}
    \end{subfigure}
    \begin{subfigure}{0.32\textwidth}
        \includegraphics[width=\linewidth]{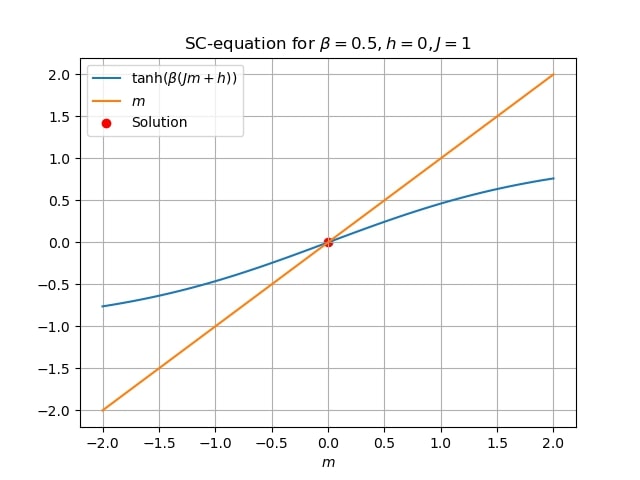}
    \end{subfigure}
    \begin{subfigure}{0.32\textwidth}
        \includegraphics[width=\linewidth]{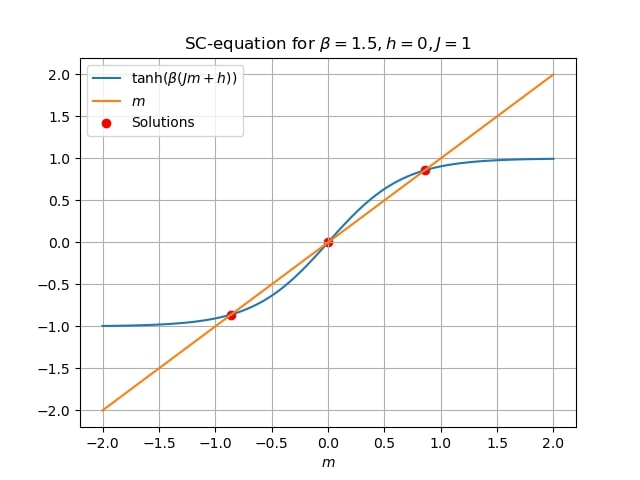}
    \end{subfigure}
\end{figure}

\begin{figure}[htpb]
\centering
    \begin{subfigure}{0.32\textwidth}
        \includegraphics[width=\linewidth]{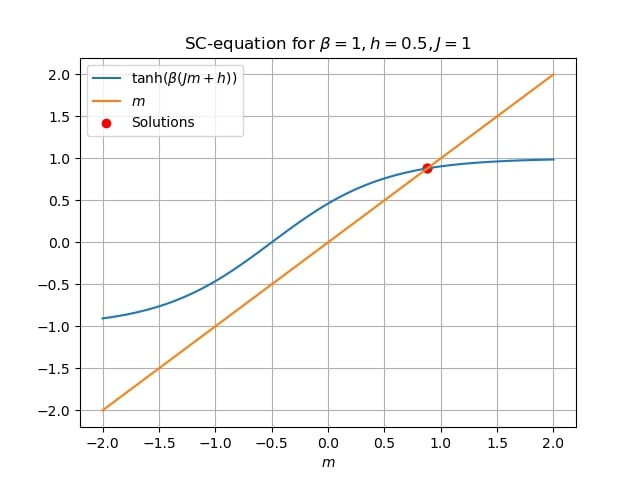}
    \end{subfigure}
    \begin{subfigure}{0.32\textwidth}
        \includegraphics[width=\linewidth]{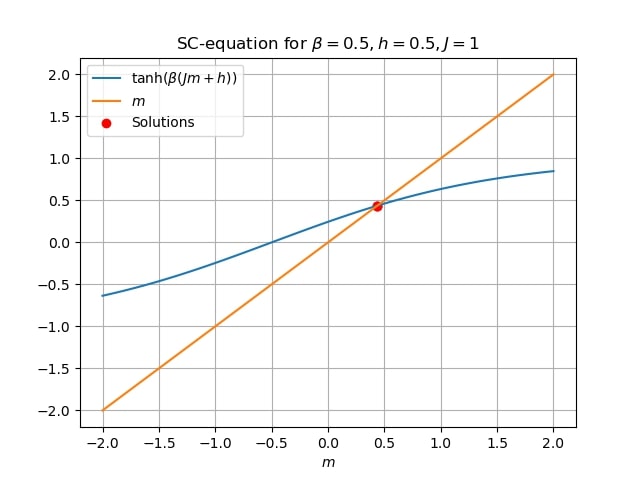}
    \end{subfigure}
    \begin{subfigure}{0.32\textwidth}
        \includegraphics[width=\linewidth]{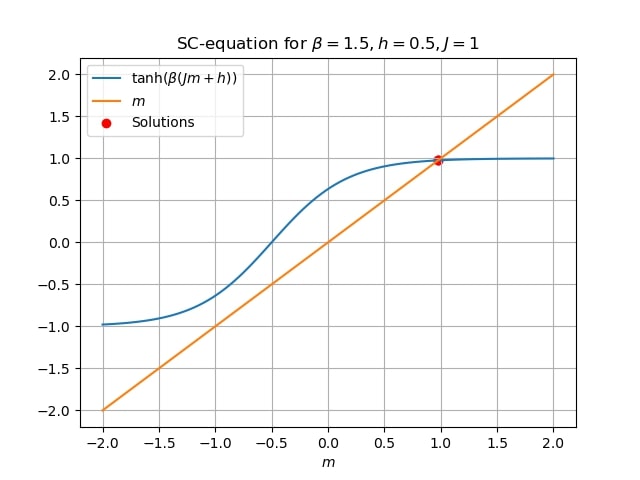}
    \end{subfigure}

\end{figure}

\begin{figure}[htpb]
\centering
    \begin{subfigure}{0.32\textwidth}
        \includegraphics[width=\linewidth]{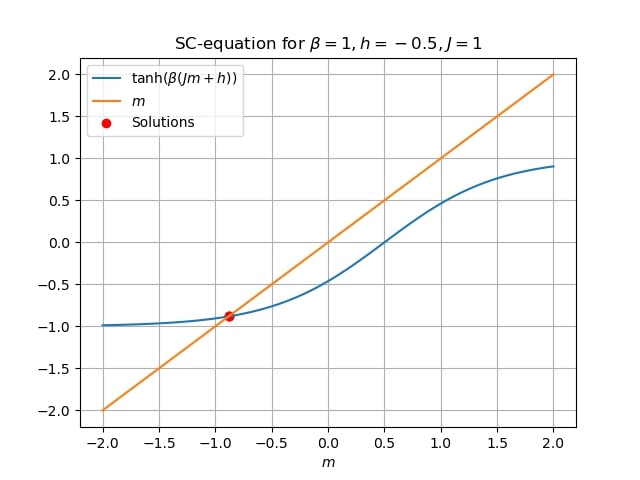}
    \end{subfigure}
    \begin{subfigure}{0.32\textwidth}
        \includegraphics[width=\linewidth]{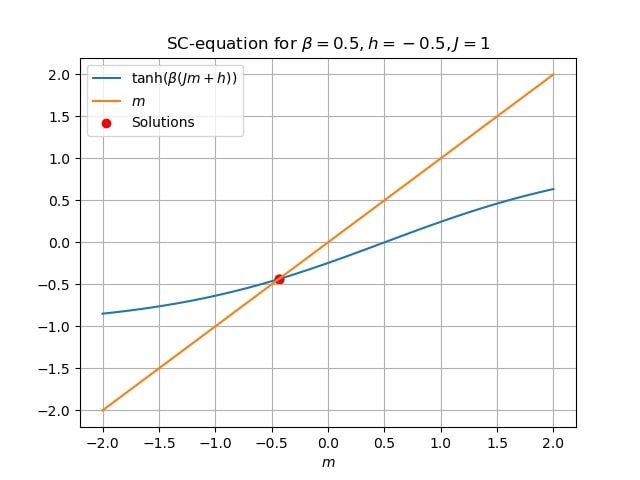}
    \end{subfigure}
    \begin{subfigure}{0.32\textwidth}
        \includegraphics[width=\linewidth]{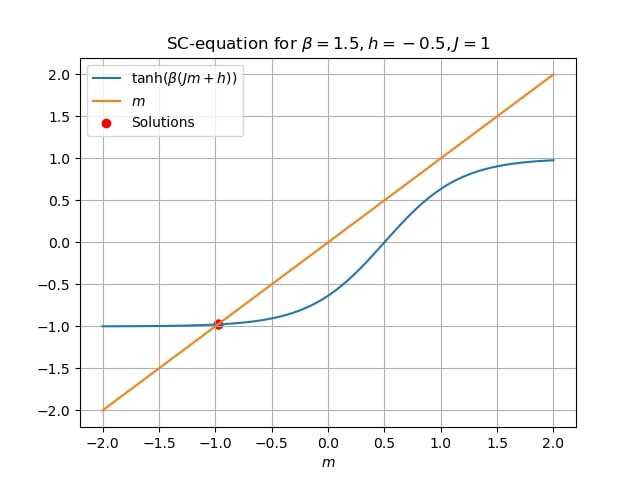}
    \end{subfigure}
\end{figure}
\vspace{\baselineskip}
\vspace{\baselineskip}

Below is also the graphical solution of the self-consistency equations of the high-load hopfield model with the RS assumption using Equation \ref{eq44} from which we can explicitly see the Retrieval zone and the Spin-Glass zone.
\begin{figure}[htpb]
\centering
        \includegraphics[width=0.84\linewidth]{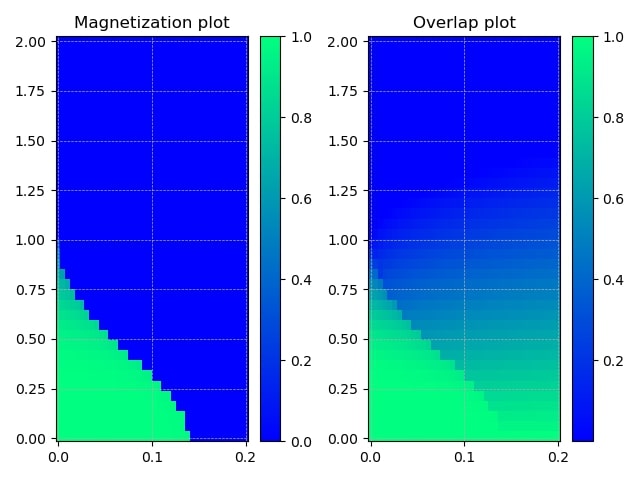}
\end{figure}

\vspace{\baselineskip}
\vspace{\baselineskip}

\section*{Appendix B: Monte Carlo Simulations}

We explain step by step the Monte Carlo simulation used in the \emph{predict} function in Code \ref{code2}. We observe that the following process was used to perform sequential dynamics while optimising the computational cost.. Let's consider the simulation
\vspace{\baselineskip}
\begin{lstlisting}[style=custompython,label=code2, breaklines=true, basicstyle=\small\ttfamily]
        sigma = test_data.copy().T
        sigma = sigma.T
        N=self.num_neuron 
        K=self.num_patterns 
        alpha = K/N
        T = temperature
        beta = 1.0 / T
        MCstat_step=50
        MCrelax_step=1
        for stat in range(0, MCstat_step):
            for step in range(0,MCrelax_step):
                for i in range(0,N):
                    k = np.random.randint(0, N)  
                    deltaE=2*sigma[k]*np.dot(sigma,self.J[:,k]) 
                    ratio=np.exp(-beta*deltaE)
                    gamma=np.minimum(ratio,1)
                    if np.any(random.uniform(0,1) < gamma):
                        sigma[k] = -sigma[k]  
 \end{lstlisting}
 \vspace{\baselineskip}
 The simulation hyper-parameters are $MCstat-step$ and $MCrelax- step$. The former indicates the number of steps to be taken before considering the final value as a statistic. The second, indicates how much you want to relax the process, i.e. how many steps you have to take so that the distribution associated with the network state is close to the Boltzmann-Gibbs distribution, which therefore minimises the free energy. These parameters must be chosen a posteriori, testing the convergence of the magnetisation as the parameters change. In our code, we decided not to relax the process and it turns out that $50$ Monte Carlo steps are sufficient for the model to have convergence in the case of retrieval. The third for loop with respect to variable $i$ corresponds to a single simulation step. The variable $i$ takes values between $1$ and $N$ in such a way that all neurons in the network can be inverted according to the following strategy. At each $i$-step, a neuron is randomly chosen; then the energy contribution $deltaE$ associated with that neuron is calculated. If $deltaE$ is less than $0$, for the minus in the Hamiltonian this gives a positive contribution and thus the state of the neuron will be flipped with probability $1$. If $deltaE$ is greater than $0$, then the probability of finding the neuron in that state is calculated according to the usual formula associated with the Boltzmann-Gibbs distribution; then a random number is extracted in $(0,1)$ and the state of the neuron is inverted only if the number extracted is smaller than the probability calculated previously. This gives a chance to reverse the state of the neuron even though this new configuration does not lead to a decrease in energy. The sigma configuration that will emerge from these three chained cycles will correspond to the new configuration that will hopefully be equal to some fixed point that has been stored in the network.

\section*{Appendix C : Sherrington-Kirkpatrick Model}
The Hamiltonian of the model is
\begin{equation*}
	\mathcal{H}_{N,J}^{SK} = -\dfrac {1}{2 \sqrt{N}} \sum_{i,j} J_{ij} \sigma_{i} \sigma_{j}
\end{equation*}
where the synaptic weights are distributed as a standard Gaussian, i.e. $J_{ij} \sim \mathcal{N}(0,1)$. Notice that the Hamiltonian depends on some random parameters whose probability is supposed to be known, which is why the model is well suited to analysing \emph{disordered systems}. Normalisation with the square root is motivated by the fact that this gives $\langle \mathcal{H} \rangle \propto N$. In the remainder of the appendix, we report the solution of the model using the \emph{Replica Trick} with the \emph{Replica Symmetric Ansatz} (see equation \ref{eq37}). We have
\begin{equation*}
\begin{aligned}
	\mathbb{E}[ Z_{N,\beta,J}^{n}] &= \mathbb{E} \left[ \sum_{\bm{\sigma}^{(1)}} ... \sum_{\bm{\sigma}^{(n)}} \exp(-\beta \sum_{a=1}^{n} \mathcal{H}_{N,\beta,J}({\bm{\sigma}^{(a)}})) \right] \\
	&= \mathbb{E} \left[ \sum_{\bm{\sigma}^{(1)} ... \bm{\sigma}^{(n)}} \exp(\frac{\beta}{2 \sqrt{N}} \sum_{a=1}^{n} \sum_{(i,j)} J_{ij} \sigma_{i}^{(a)} \sigma_{j}^{(a)}) \right] = \\
&=  \sum_{\bm{\sigma}^{(1)}} ... \sum_{\bm{\sigma}^{(n)}} \int \left[ \prod_{i<j} \dfrac{d J_{ij}}{\sqrt{2 \pi}}  \exp(-\dfrac{J_{ij}^{2}}{2}) \exp( \beta \dfrac{J_{ij}}{\sqrt{N}} \sum_{a=1}^{n} \sigma_{i}^{(a)} \sigma_{j}^{(a)} ) \right] =  \\
&= \sum_{\bm{\sigma}^{(1)}} ... \sum_{\bm{\sigma}^{(n)}} \prod_{i<j} \dfrac{\sqrt{2\pi}}{\sqrt{2\pi}} \exp(\dfrac{\beta^{2}}{2N}  \sum_{a=1}^{n}  \sum_{b=1}^{n}  \sigma_{i}^{(a)} \sigma_{j}^{(a)} \sigma_{i}^{(b)} \sigma_{j}^{(b)})
\end{aligned}
\end{equation*}
where in the last step we used the fact that 
\begin{equation*}
	\int dx \exp( -A x^{2} + B x) = \sqrt{\frac{\pi}{A}} \exp(\dfrac{B^{2}}{4A})
\end{equation*}
with $A=\frac{1}{2}$ and $B = \dfrac{\beta}{\sqrt{N}}  \sum_{a=1}^{n} \sigma_{i}^{(a)} \sigma_{j}^{(a)}$. Continuing the rewriting of the $n$-th moment of the partition function, we obtain that 
\begin{equation*}
\begin{aligned}
	\mathbb{E}[ Z_{N,\beta,J}^{n}]  &=  \sum_{\bm{\sigma}^{(1)}} ... \sum_{\bm{\sigma}^{(n)}} \exp( \dfrac{\beta^{2}}{4N} \sum_{i<j}  \sum_{a=1}^{n}  \sum_{b=1}^{n}  \sigma_{i}^{(a)} \sigma_{j}^{(a)} \sigma_{i}^{(b)} \sigma_{j}^{(b)} ) \\
	&= \sum_{\bm{\sigma}^{(1)}} ... \sum_{\bm{\sigma}^{(n)}} \exp( \dfrac{\beta^{2}}{4N} \sum_{i,j}  \sum_{a=1}^{n}  \sum_{b=1}^{n}  \sigma_{i}^{(a)} \sigma_{j}^{(a)} \sigma_{i}^{(b)} \sigma_{j}^{(b)} -  \dfrac{\beta^{2}}{4N} n^{2} N ) \\
	&=  \sum_{\bm{\sigma}^{(1)}} ... \sum_{\bm{\sigma}^{(n)}} \exp{ \dfrac{\beta^{2}}{2N}  \sum_{a \neq b}  \left(\sum_{i} \sigma_{i}^{(a)} \sigma_{i}^{(b)} \right)^{2} +  \dfrac{\beta^{2}}{4N} n N^{2} -  \dfrac{\beta^{2}}{4N} n^{2} N } \\
	&= \exp( \dfrac{\beta^{2}}{4} n (N- n) ) \sum_{\bm{\sigma}^{(1)}} ... \sum_{\bm{\sigma}^{(n)}} \prod_{a<b} \exp{ \dfrac{\beta^{2}}{2N} \left(\sum_{i} \sigma_{i}^{(a)} \sigma_{i}^{(b)} \right)^{2} } 
\end{aligned}
\end{equation*}
and we use the Gaussian result in the other direction with $B = \beta^{2} \sum_{i} \sigma_{i}^{(a)} \sigma_{i}^{(b)}$ , $A = \frac{\beta^{2} N}{2}$ so
\begin{equation*}
	\mathbb{E}[ Z_{N,\beta,J}^{n}] = \exp( \dfrac{\beta^{2}}{4} n (N- n) ) \int  \sum_{\bm{\sigma}^{(1)}} ... \sum_{\bm{\sigma}^{(n)}} \prod_{a<b}  \dfrac{dQ_{ab}}{\sqrt{ \frac{2 \pi}{\beta^{2}}}} \exp( -\frac {\beta^{2}N}{2} Q_{ab}^{2} ) \exp ( \beta^{2} Q_{ab} \sum_{i} \sigma_{i}^{(a)} \sigma_{i}^{(b)} ).
\end{equation*}
Continuing in this way we obtain
\begin{equation*}
\begin{aligned}
	\mathbb{E}[ Z_{N,\beta,J}^{n}] &=  e^{\frac{\beta^{2}}{4} n (N- n)} \int  \prod_{a<b}  \dfrac{dQ_{ab}}{\sqrt{ \frac{2 \pi}{\beta^{2}}}} e^{ -\frac {\beta^{2}N}{2} Q_{ab}^{2} }  \sum_{\bm{\sigma}^{(1)}} ... \sum_{\bm{\sigma}^{(n)}} \prod_{i=1}^{N} e^{\beta^{2} \sum_{a<b} Q_{ab} \sigma_{i}^{(a)} \sigma_{i}^{(b)} } \\
	&= e^{\frac{\beta^{2}}{4} n (N- n)} \int  \prod_{a<b}  \dfrac{dQ_{ab}}{\sqrt{ \frac{2 \pi}{\beta^{2}}}} e^{ -\frac {\beta^{2}N}{2} Q_{ab}^{2} }  \prod_{i=1}^{N} \sum_{\bm{\sigma}^{(1)}} ... \sum_{\bm{\sigma}^{(n)}}  e^{\beta^{2} \sum_{a<b} Q_{ab} \sigma_{i}^{(a)} \sigma_{i}^{(b)} } \\
	&= e^{\frac{\beta^{2}}{4} n (N- n)} \int  \prod_{a<b}  \dfrac{dQ_{ab}}{\sqrt{ \frac{2 \pi}{\beta^{2}}}} e^{ -\frac {\beta^{2}N}{2} Q_{ab}^{2} } \left(  \sum_{\bm{\sigma}^{(1)}} ... \sum_{\bm{\sigma}^{(n)}} e^{\beta^{2} \sum_{a<b} Q_{ab} \sigma_{i}^{(a)}  \sigma_{i}^{(b)} } \right)^{N} \\
	&=  e^{\frac{\beta^{2}}{4} n (N- n)} \int  \left(\prod_{a<b}  \dfrac{dQ_{ab}}{\sqrt{ \frac{2 \pi}{\beta^{2}}}}\right) e^{ -\frac {\beta^{2}N}{2} \sum_{a<b} Q_{ab}^{2} } \exp( N \log  \sum_{\bm{\sigma}^{(1)}} ... \sum_{\bm{\sigma}^{(n)}} e^{\beta^{2} \sum_{a<b} Q_{ab} \sigma_{i}^{(a)}  \sigma_{i}^{(b)} } ) .
\end{aligned}
\end{equation*}
In conclusion, we derived that 
\begin{equation*}
	\mathbb{E}[ Z_{N,\beta,J}^{n}]  = \int \prod_{a<b} \dfrac{dQ_{ab}}{\sqrt{ \frac{2 \pi}{\beta^{2}}}} \exp{ -N \mathcal{A}[Q]}
\end{equation*}
where the argument of the exponential is equal to 
\begin{equation*}
	\mathcal{A}[Q] = -\dfrac{\beta^{2}}{4} n (N-n) + \dfrac{\beta^{2}}{2} \sum_{a<b} Q_{ab}^{2} - \log (\sum_{\bm{\sigma}^{(1)} ... \bm{\sigma}^{(1)}} e^{\beta^{2} \sum_{a<b} Q_{ab} \sigma_{i}^{(a)}  \sigma_{i}^{(b)} } ) .
\end{equation*}
Now, we want to apply the RS-ansatz, i.e. $Q_{ab} = \mathds{1}(a=b) + q \mathds{1}(a \neq b)$. Thus
\begin{equation*}
	\mathcal{A}[Q] = -\dfrac{\beta^{2}}{4} n (N-n) + \dfrac{\beta^{2}}{2} \sum_{a<b} q^{2} - \log (\sum_{\bm{\sigma}^{(1)} ... \bm{\sigma}^{(1)}} e^{\beta^{2} q \sum_{a<b} \sigma_{i}^{(a)}  \sigma_{i}^{(b)} } ) .
\end{equation*}
If we assume commutativity of the limits for $N$,$n$  then we obtain 
\begin{equation*}
A_{\beta}^{Q} = \lim_{n \to 0} \dfrac{1}{n} \lim_{N \to \infty} \dfrac{1}{N} ( \mathbb{E}[ Z_{N,\beta,J}^{n}] - 1) = -  \lim_{n \to 0}  \dfrac{1}{n} \mathcal{A}[Q^{*}]
\end{equation*} 
where we used Laplace's method and $Q^{*} = \operatorname{argmin} \mathcal{A}[Q])$. By calculating the extremal point of $\mathcal{A}$ and doing the limit for $n$, we arrive at the formula
\begin{equation*}
	A_{\beta}^{Q,RS} = \dfrac{\beta^{2}}{4} (1 - q^{RS})^{2} + \log 2 + \log \int d \mu(z) \log \cosh(\beta z \sqrt{q^{RS}})
\end{equation*}
with $q^{RS}$ that satisfies the SC-equation
\begin{equation*}
	q^{RS} = \int d \mu(z) \tanh^{2} ( \beta z \sqrt{q^{RS}}).
\end{equation*}
 \vspace{\baselineskip}
 \vspace{\baselineskip}

\end{document}